\newcommand{\abs}[1]{\left| #1 \right|}
\newcommand{\okra}[1]{\left( #1 \right)}
\newcommand{\mean}[1]{\mathbf{E}\, #1}
\newcommand{\ceil}[1]{\left\lceil #1 \right\rceil}
\newcommand{\kwad}[1]{\left[ #1 \right]}
\newcommand{\klam}[1]{\left\{ #1 \right\}}
\newcommand{\boole}[1]{{\bf 1}{\klam{#1}}}
\DeclareMathOperator{\esssup}{ess\,sup}
\DeclareMathOperator{\PPM}{PPM}
\DeclareMathOperator{\NPD}{NPD}
\DeclareMathOperator{\card}{card}
\DeclareMathOperator{\var}{Var}
\DeclareMathOperator*{\argmax}{arg\,max}
\newtheorem{definition}{Definition}
\newtheorem{example}{Example}
\newtheorem{proposition}{Proposition}
\newtheorem{theorem}{Theorem}
\newtheorem{lemma}{Lemma}
\author{{\L}ukasz D\k{e}bowski%
  \thanks{{\L}. D\k{e}bowski is with the Institute of Computer Science,
    Polish Academy of Sciences, ul. Jana Kazimierza 5, 01-248
    Warszawa, Poland (e-mail: ldebowsk@ipipan.waw.pl). %
  } }
\title{Universal Densities Exist \\ for Every
  Finite Reference Measure} \date{}
\begin{document}

\begin{titlepage}

\maketitle

\begin{abstract}
  As it is known, universal codes, which estimate the entropy rate
  consistently, exist for stationary ergodic sources over finite
  alphabets but not over countably infinite ones. We generalize
  universal coding as the problem of universal densities with respect
  to a fixed reference measure on a countably generated measurable
  space.  We show that universal densities, which estimate the
  differential entropy rate consistently, exist for finite reference
  measures.  Thus finite alphabets are not necessary in some sense.
  To exhibit a universal density, we adapt the non-parametric
  differential (NPD) entropy rate estimator by Feutrill and Roughan.
  Our modification is analogous to Ryabko's modification of prediction
  by partial matching (PPM) by Cleary and Witten.  Whereas Ryabko
  considered a mixture over Markov orders, we consider a mixture over
  quantization levels. Moreover, we demonstrate that any universal
  density induces a strongly consistent Ces\`aro mean estimator of
  conditional density given an infinite past. This yields a universal
  predictor with the $0-1$ loss for a countable alphabet. Finally, we
  specialize universal densities to processes over natural numbers and
  on the real line. We derive sufficient conditions for consistent
  estimation of the entropy rate with respect to infinite reference
  measures in these domains.
  \\[1ex]
  \textbf{Keywords}: universal coding; prediction by partial matching;
  quantization; density estimation; universal prediction
  \\[1ex]
  \textbf{MSC 2020}: 94A29, 62M20
\end{abstract}


\end{titlepage}


\section{Introduction}
\label{secIntroduction}

Consider the family of stationary ergodic measures over a given
countable alphabet.  It is known that universal measures, i.e., those
consistently estimating the entropy rate in the almost sure sense and
in expectation, exist for any finite alphabet. A simple example
thereof is the PPM (prediction by partial matching) measure, also
called the $R$-measure, constructed gradually by Cleary and Witten
\cite{ClearyWitten84} and by Ryabko \cite{Ryabko88en2,Ryabko08}.

Universal measures are important for many reasons. They matter not
only in practical data compression but also in various problems of
statistical inference, as advocated in \cite{RyabkoAstolaMalyutov16}.
Here, we name a few examples of their applications:
\begin{itemize}
\item The Shannon-Fano code taken with respect to a universal measure
  is an instance of a lossless universal code for data
  compression. Other important instances of universal codes were
  discovered in
  \cite{ZivLempel77,NeuhoffShields98,KiefferYang00}. These other codes
  do not necessarily induce a universal measure due to the strict
  Kraft inequality.
\item Universal measures are an important building block also in
  estimation of the Markov order \cite{MerhavGutmanZiv89} and of the
  hidden Markov order \cite{ZivMerhav92}. Indirectly, they are also
  connected to upper bounds for mutual information and showing
  disjointness of classes of finite-state and perigraphic processes,
  discussed in statistical language modeling
  \cite{Debowski21b,Debowski21}.
\item As shown in \cite{GyorfiPaliMeulen94}, having a measure that is
  universal in expectation, we can construct a strongly consistent
  Ces\`aro mean estimator of the marginal measure for memoryless
  sources. Since this convergence holds in the Kullback-Leibler
  divergence, it also holds in the total variation---by the Pinsker
  inequality \cite{Pinsker60en,CsiszarKorner11}.
\item Moreover, universal measures induce universal predictors with
  the $0-1$ loss under mild conditions \cite{DebowskiSteifer22}, see
  also \cite{MorvaiWeiss21} for other related results. In particular,
  if there is an estimator of the conditional density given an
  infinite past that is strongly consistent in the total variation, it
  also induces a universal predictor with the $0-1$ loss, see
  \cite{Bailey76,Ornstein78}.
\end{itemize}
It is known, alas, that universal measures or codes do not exist for a
countably infinite alphabet
\cite{Kieffer78,GyorfiPaliMeulen94,SilvaPiantanida20}.  It may seem
that the assumption of a finite alphabet is necessary in general.  In
this paper, we disprove this hypothesis by casting the problem of
universal measures into universal densities, i.e., Radon-Nikodym
derivatives with respect to a given reference measure.

The direct inspiration of the following constructions comes from a
recent paper by Feutrill and Roughan \cite{FeutrillRoughan21}. They
considered a problem of estimating the differential entropy rate
$h_\lambda$ (with respect to the Lebesgue measure $\lambda$) of
Gaussian processes with long memory, such as the fractional Gaussian
noise (FGN) or ARFIMA processes. They observed that the differential
entropy rate can be roughly estimated via the simple non-parametric
differential (NPD) entropy rate estimator, which reads
\begin{align}
  \label{FeutrilRoughan}
  \hat h_{\NPD}(x_1,...,x_n)=
  \hat H\okra{\ceil{\frac{x_1}{\Delta}},...,\ceil{\frac{x_n}{\Delta}}}
  +\log\Delta,
\end{align}
where $\Delta>0$ is a fixed bin width and $\hat H(y_1,...,y_n)$ is a
consistent estimator of the entropy rate for a countably infinite
alphabet. A reasonable choice of $\hat H(y_1,...,y_n)$ can be the
estimator by Kontoyiannis et al.\ \cite{KontoyiannisOthers98}, which
is consistent under the Doeblin condition, whereas its simple
modification discussed in \cite{GaoKontoyiannisBienenstock08} is
consistent for any stationary ergodic process but over a finite
alphabet.  What is interesting, for the estimator by
\cite{KontoyiannisOthers98}, the NPD estimates are empirically quite
close to the true differential entropy rate $h_\lambda$ of FGN and
ARFIMA processes even for $\Delta=1$. Feutrill and Roughan
\cite[Theorem VII.1]{FeutrillRoughan21} tried to argue that the NPD
estimator tends to the differential entropy rate for $\Delta\to 0$ and
$n\to\infty$ but their treatment of the joint limit was not rigorous.

Having learned about this result, we thought that there must be some
solid mathematical idea underneath. Suppose that the observations are
bounded in the unit interval, $x_i\in(0,1]$. Let us take the bin width
$\Delta=2^{-l}$ where $l=0,1,2,...$ and apply a universal measure
$R^l$ for a finite alphabet $\klam{1,2,...,2^l}$ as a special case of
estimator $\hat H(y_1,...,y_n)$. Then we obtain
$\hat h_{\NPD}(x_1,...,x_n)=-\frac{1}{n}\log
R_\lambda^l(x_1,...,x_n)$, where function
\begin{align}
  \label{BoundedDensity}
  R_\lambda^l(x_1,...,x_n):=2^{ln}\cdot
  R^l\okra{\ceil{2^lx_1},...,\ceil{2^lx_n}}
\end{align}
is a probability density. Making one step further, let weights $w_l>0$
with $\sum_{l=0}^\infty w_l=1$. Then we may consider the mixture over all
quantization levels,
\begin{align}
  \label{BoundedMixture}
  R_\lambda(x_1,...,x_n):=\sum_{l=0}^\infty
  w_l\cdot 2^{ln}\cdot
  R^l\okra{\ceil{2^lx_1},...,\ceil{2^lx_n}},
\end{align}
which is also a probability density. The transition from
(\ref{BoundedDensity}) to (\ref{BoundedMixture}) is analogous to
Ryabko's \cite{Ryabko88en2,Ryabko08} derivation of the universal
$R$-measure over a finite alphabet by taking a mixture of PPM measures
by \cite{ClearyWitten84} over all Markov orders. Hence, for any
stationary ergodic process $(X_i)_{i\in\mathbb{Z}}$ with
$X_i\in(0,1]$, we have
\begin{align}
  \limsup_{n\to\infty}\kwad{-\frac{1}{n}\log R_\lambda(X_1,...,X_n)}
  &\le
    \inf_{l\ge 0} h_\lambda^l \text{ a.s.},
\end{align}
where quantized entropy rates $h_\lambda^l$ are the limits of
$-\frac{1}{n}\log R_\lambda^l(X_1,...,X_n)$ by universality of
measures $R^l$. Moreover, since $R_\lambda$ is a probability density
then by the asymptotic equipartition \cite{Barron85} and by
\cite[Theorem 3.1]{Barron85b}, we obtain
\begin{align}
  \liminf_{n\to\infty}\kwad{-\frac{1}{n}\log R_\lambda(X_1,...,X_n)}
  &\ge
    h_\lambda \text{ a.s.}
\end{align}
Thus density $R_\lambda$ estimates entropy rate $h_\lambda$
consistently if $\inf_{l\ge 0} h_\lambda^l = h_\lambda$.

In this paper, we prove that this holds true in a more general
setting. Density (\ref{BoundedMixture}) is an instance of a more
general construction, which we call the NPD density to honor the idea
by Feutrill and Roughan. The NPD density can be defined with respect
to an arbitrary finite reference measure on a countably generated
measurable space. Moreover, if the reference measure is finite then
the respective NPD density is universal, i.e., it estimates
consistently the differential entropy rate with respect to this
reference measure. This result can be proved using the asymptotic
equipartition for densities \cite{Barron85}, Barron's inequality
\cite[Theorem 3.1]{Barron85b}, and the monotone convergence of
$f$-divergences for filtrations \cite[Chapter 22, Problem
4]{Debowski21}, a by-product of our earlier investigations
\cite[Lemma 2]{Debowski09}. The last one yields
$\inf_{l\ge 0} h_\lambda^l = h_\lambda$ indeed.

Thus, it is not finiteness of the alphabet but rather finiteness of
the reference measure that allows for universal densities.
Non-existence of universal codes for a countably infinite alphabet is
tightly connected to non-existence of a uniform probability measure
over the set of natural numbers. The relevance of finite reference
measures may be known to experts in density estimation, see
\cite[Lemma 2]{YangBarron99}, although that particular result concerns
memoryless sources.  In this context, we also recall that
non-existence of universal measures for countably infinite alphabets
rests on non-existence of a consistent density estimator for
memoryless sources over such alphabets
\cite{GyorfiPaliMeulen94}. Thus, there are interactions between
density estimation and universal coding, worth further exploration.

Having constructed the NPD density and proven its universality, we
will also discuss some applications. These are as follows.
\begin{itemize}
\item Inspired by work \cite{GyorfiPaliMeulen94}, we show that if a
  universal density exists (for a given subclass of processes) then it
  induces a strongly consistent Ces\`aro mean estimator of the
  conditional density given an infinite past, in the total variation,
  as it follows by the Pinsker inequality.  Any such conditional
  density estimator solves also the problem of universal prediction
  with the $0-1$ loss for a countable alphabet. This strengthens our
  earlier results from work \cite{DebowskiSteifer22}, which dealt with
  a finite alphabet.

\item As some examples, we specialize NPD densities to the classes of
  processes over a countably infinite alphabet and on the real
  line. We easily name some sufficient conditions that allow for
  consistent estimation of the entropy rate with respect to infinite
  reference measures in these domains, see also
  \cite{SilvaPiantanida20}. In particular, as we show, there exists a
  strongly consistent entropy rate estimator with respect to the
  Lebesgue measure in the class of stationary ergodic Gaussian
  processes.
\end{itemize}

The organization of the paper is as follows.  In Section
\ref{secPreliminaries}, we amply recall preliminaries to keep the
paper relatively self-contained.  Section \ref{secMain} contains the
main result, i.e., the construction of the universal NPD density with
respect to a finite reference measure.  Having an instance of a
universal density, in Section \ref{secApplications}, we construct a
consistent Ces\`aro mean estimator of the conditional density given an
infinite past.  Consequently, this estimator of conditional density is
turned into a universal predictor in the countable alphabet setting.
As some further examples, in Section \ref{secExamples}, we present
applications of the total NPD density to processes over a countably
infinite alphabet and on the real line. The paper is concluded in
Section \ref{secConclusion}, where we sketch open problems.

\section{Preliminaries}
\label{secPreliminaries}

In this section we establish our setting and report the received
knowledge.  We discuss the introductory material such as the entropy
rate, the asymptotic equipartition, the definition of universal
measures, and the universal PPM measure for a finite alphabet.

\subsection{General setting}
\label{secSetting}

Let $(\mathbb{X},\mathcal{X},\mu)$ be a countably generated measurable
space with a $\sigma$-finite measure $\mu$ on it.  Measure $\mu$ will
be called the reference measure. Simple familiar examples are the
counting measure $\mu(A)=\gamma(A):=\card A$ for a countable alphabet
$\mathbb{X}$ or the Lebesgue measure $\mu([a,b])=\lambda([a,b]):=b-a$
for $\mathbb{X}=\mathbb{R}$.  Consider the product space
$(\mathbb{X}^{\mathbb{Z}},\mathcal{X}^{\mathbb{Z}})$ and put random
variables
$X_k:\mathbb{X}^{\mathbb{Z}}\ni(x_i)_{i\in\mathbb{Z}}\mapsto
x_k\in\mathbb{X}$.  We write the tuples of points as
$x_{j:k}:=(x_j,x_{j+1},...,x_k)$.  For a probability measure $R$ on
$(\mathbb{X}^{\mathbb{Z}},\mathcal{X}^{\mathbb{Z}})$, we denote its
finite-dimensional restrictions $R_n(A):=R(X_{1:n}\in A)$ and if
$R_n\ll \mu^n$ then we write the densities
\begin{align}
 R_\mu(x_{1:n}):=\frac{dR_n}{d\mu^n}(x_{1:n}). 
\end{align}
The space of stationary ergodic measures on
$(\mathbb{X}^{\mathbb{Z}},\mathcal{X}^{\mathbb{Z}})$ with respect to
the shift operation will be denoted as $\mathbb{E}$. A measure
$P\in\mathbb{E}$ where $P_n\ll \mu^n$ is called a memoryless source if
$P_\mu(x_{1:n})=\prod_{i=1}^n P_\mu(x_i)$.

Fix a measure $P\in\mathbb{E}$ where $P_n\ll \mu^n$.  Convergence in
probability with respect to measure $P$ is denoted
\begin{align}
  \lim_{n\to\infty} Y_n=Y \text{ i.p.}\iff \forall_{\epsilon>0}
  \lim_{n\to\infty} P(|Y_n-Y|>\epsilon)=0.
\end{align}
The expectation operator $\mean$ and the quantifier ``almost surely''
(a.s.)\ will be also taken throughout with respect to $P$.  Throughout
the paper, symbol $\log$ denotes the logarithm to a fixed
underspecified base. We define the block entropy
\begin{align}
  h_\mu(n):=\mean\kwad{-\log P_\mu(X_{1:n})}=
  -\int P_\mu(x_{1:n})\log P_\mu(x_{1:n}) d\mu^n(x_{1:n}).
\end{align}
A short notice, if the reference measure $\mu$ is the counting measure
then $h_\mu(n)\ge 0$ since $P_\mu(x_{1:n})\le 1$. By contrast, if
$\mu$ is a probability measure then $h_\mu(n)\le 0$ since $-h_\mu(n)$
is the Kullback-Leibler divergence between $P_n$ and $\mu^n$.

In any case, by stationarity and by the Jensen inequality, the block
entropy is subadditive, $h_\mu(n+m)\le h_\mu(n)+h_\mu(m)$. Hence by
the Fekete lemma for subadditive sequences \cite{Fekete23}, see also
\cite[Theorem 5.11]{Debowski21}, sequence $h_\mu(n)/n$ is decreasing
and there exists the entropy rate
\begin{align}
  h_\mu:=\lim_{n\to\infty} \frac{h_\mu(n)}{n}
  =\inf_{n\ge 1} \frac{h_\mu(n)}{n}.
\end{align}
Moreover, by the Shannon-McMillan-Breiman (SMB) theorem, in many
cases, we have the asymptotic equipartition property, namely,
\begin{align}
  \label{SMB}
  \lim_{n\to\infty} \kwad{-\log P_\mu(X_{1:n})}/n=h_\mu \text{ a.s.},
\end{align}
as noticed by Shannon \cite{Shannon48} for memoryless sources and
consecutively generalized in
\cite{Breiman57,Chung61,Barron85,AlgoetCover88} to other settings.

In particular, let us consider the class of stationary ergodic
measures with a finite entropy rate,
\begin{align}
  \mathbb{E}(\mu):=\klam{P\in\mathbb{E}:
  P_n\ll \mu^n \text{ and }
  \abs{h_\mu}<\infty}.
\end{align}
Let us also denote the conditional density
\begin{align}
  P_\mu(x_{n+1:n+m}|x_{1:n}):=\frac{P_\mu(x_{1:n+m})}{P_\mu(x_{1:n})}.
\end{align}
As shown by Barron \cite[Theorem 1]{Barron85}, we have the asymptotic
equipartition (\ref{SMB}) for an arbitrary $P\in\mathbb{E}(\mu)$. This
fact is a consequence of the Breiman ergodic theorem \cite{Breiman57},
\cite[Theorem 12(c)]{Algoet94} since, as shown by Barron \cite[Proof
of Theorem 1]{Barron85}, for $P\in\mathbb{E}(\mu)$ there exists the
limit of conditional densities
\begin{align}
  P_\mu(X_0|X_{-\infty:-1}):=\lim_{n\to\infty}
  P_\mu(X_0|X_{-n:-1}) \text{ a.s.}
\end{align}
and $\mean\sup_{n\in\mathbb{N}}\abs{\log
  P_\mu(X_0|X_{-n:-1})}<\infty$, whereas
\begin{align}
  h_\mu=\mean \kwad{-\log P_\mu(X_0|X_{-\infty:-1})}.
\end{align}

Obviously, the asymptotic equipartition is an interesting property
from a statistical perspective since it suggests that the entropy rate
can be estimated, which is known to be the case for a finite counting
reference measure. However, the problem of applying statement
(\ref{SMB}) for estimating the entropy rate is that we need to
estimate the unknown probability measure.

\subsection{Universal measures}
\label{secUniversal}

To approach the problem of entropy estimation in a reasonable way, let
us consider another probability measure $R$ where $R_n\ll \mu^n$,
which need not be stationary or ergodic.  As it is a part of an older
information-theoretic folklore, for any $m>1$ we have a particular
case of Markov's inequality, called sometimes the Barron inequality,
\begin{align}
  \label{Barron}
  P\okra{\log \frac{dP_n}{dR_n}(X_{1:n})\le -\log m}&\le \frac{1}{m},
\end{align}
shown for example in \cite[Theorem 3.1]{Barron85b} or applied
implicitly much earlier in \cite{Chaitin75}.  Since
$dP_n/dR_n(x_{1:n})=P_\mu(x_{1:n})/R_\mu(x_{1:n})$, as a result, by an
easy application of the Borel-Cantelli lemma, we obtain
\begin{align}
  \label{SourceAS}
  \liminf_{n\to\infty} \kwad{-\log R_\mu(X_{1:n})}/n&\ge h_\mu \text{ a.s.}
\end{align}
Since the Kullback-Leibler divergence is non-negative in general,
\begin{align}
  D(P_n||R_n)&:=\mean\kwad{\log \frac{dP_n}{dR_n}(X_{1:n})}
               =\mean\kwad{\log \frac{P_\mu(X_{1:n})}{R_\mu(X_{1:n})}}
               \ge 0,
\end{align}
we also have a similar result in expectation:
\begin{align}
  \label{SourceExp}
  \liminf_{n\to\infty} \mean\kwad{-\log R_\mu(X_{1:n})}/n&\ge h_\mu \text{ a.s.}
\end{align}

Consequently, we will consider three definitions of a universal
measure. By these definitions, a universal measure can be used for
estimating the entropy rate of an unknown stationary ergodic process.
\begin{definition}
  \label{defiUniAS}
  A probability measure $R$ where $R_n\ll \mu^n$ is called
  strongly universal with respect to a measure $\mu$
  if for every measure $P\in\mathbb{E}(\mu)$,
  \begin{align}
    \label{UniAS}
    \lim_{n\to\infty} \kwad{-\log R_\mu(X_{1:n})}/n&= h_\mu \text{ a.s.}
  \end{align}  
\end{definition}
\begin{definition}
  \label{defiUniExp}
  A probability measure $R$ where $R_n\ll \mu^n$ is called
  universal in expectation with respect to a measure
  $\mu$ if for every measure $P\in\mathbb{E}(\mu)$,
  \begin{align}
    \label{UniExp}
    \lim_{n\to\infty} \mean\kwad{-\log R_\mu(X_{1:n})}/n&= h_\mu.
  \end{align}  
\end{definition}
\begin{definition}
  \label{defiUniIP}
  A probability measure $R$ where $R_n\ll \mu^n$ is called universal
  in probability with respect to a measure $\mu$ if for every measure
  $P\in\mathbb{E}(\mu)$,
  \begin{align}
    \label{UniIP}
    \lim_{n\to\infty} \kwad{-\log R_\mu(X_{1:n})}/n&= h_\mu \text{ i.p.}
  \end{align}  
\end{definition}
\noindent
\emph{Remark:} A measure $R$ that satisfies (\ref{UniAS}),
(\ref{UniExp}), and (\ref{UniIP}) is simply called universal with
respect to $\mu$. If the reference measure is the counting measure
over a countable alphabet $\mathbb{X}$, i.e.,
$\mu(A)=\gamma(A):=\card A$ for $A\subset\mathbb{X}$, then we speak of
measures that are universal with respect to alphabet $\mathbb{X}$,
respectively.  In this case, we drop the subscript $\gamma$:
$P_\gamma(x)\to P(x)$, $R_\gamma(x)\to R(x)$, and $h_\gamma\to h$.

In fact, implications (\ref{UniAS}) $\implies$ (\ref{UniExp})
$\implies$ (\ref{UniIP}) hold usually. Obviously, (\ref{UniAS})
$\implies$ (\ref{UniIP}) since the almost sure convergence implies the
convergence in probability by the Riesz theorem. By contrast,
implications (\ref{UniAS}) $\implies$ (\ref{UniExp}) and
(\ref{UniExp}) $\implies$ (\ref{UniIP}) need a more explicit
information-theoretic proof. In the next two propositions we
generalize facts that are well known in the case of the counting
reference measure.

\begin{proposition}
  \label{theoUniExp}
  If measure $R$ is strongly universal with respect to measure $\mu$
  and $R_\mu(x_{1:n})\ge Kc^n$ holds uniformly for some $0<K<1$ and
  $c>0$ then measure $R$ is universal in expectation with respect to
  measure $\mu$.
\end{proposition}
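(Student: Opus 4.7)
The plan is to combine the almost-sure convergence guaranteed by strong universality with a reverse-Fatou argument made possible by the uniform lower bound on $R_\mu$. The inequality $\liminf_{n\to\infty} \mean[-\log R_\mu(X_{1:n})]/n \ge h_\mu$ is already known (this is essentially equation (\ref{SourceExp})), so the real task is to prove the matching $\limsup$ inequality in expectation.

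First I would translate the hypothesis $R_\mu(x_{1:n}) \ge K c^n$ into a uniform \emph{upper} bound
\begin{align}
  \kwad{-\log R_\mu(x_{1:n})}/n \le \frac{\log(1/K)}{n} + \log(1/c) \le \log(1/K) + \log(1/c),
\end{align}
valid for every realization and every $n \ge 1$ (using $K<1$ so $\log(1/K)>0$). This uniform boundedness from above is exactly what is needed to pass from a.s.\ convergence to convergence in expectation via Fatou.

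Next, I would set
\begin{align}
  Y_n := \frac{\log(1/K)}{n} + \log(1/c) - \frac{-\log R_\mu(X_{1:n})}{n},
\end{align}
which is non-negative by the bound above. By strong universality (\ref{UniAS}), $Y_n \to \log(1/c) - h_\mu$ almost surely. Applying Fatou's lemma to the non-negative sequence $(Y_n)$ yields
\begin{align}
  \log(1/c) - h_\mu = \mean\kwad{\lim_{n\to\infty} Y_n} \le \liminf_{n\to\infty} \mean Y_n = \log(1/c) - \limsup_{n\to\infty} \frac{\mean\kwad{-\log R_\mu(X_{1:n})}}{n},
\end{align}
which rearranges to $\limsup_{n\to\infty} \mean[-\log R_\mu(X_{1:n})]/n \le h_\mu$. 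Combined with (\ref{SourceExp}), this gives (\ref{UniExp}).

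I do not foresee a serious obstacle; the only point requiring care is that $-\log R_\mu$ is not \emph{itself} non-negative (it could be very negative when $R_\mu$ is large), which is precisely why one must subtract $-\log R_\mu(X_{1:n})/n$ from the uniform upper bound rather than apply Fatou directly. The role of the lower bound $R_\mu \ge Kc^n$ is exactly to cap the positive part of $-\log R_\mu$ linearly in $n$, turning the problem into a bona fide dominated/reverse-Fatou situation. No finiteness of alphabet or of $\mu$ enters the argument at this stage.
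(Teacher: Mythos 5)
Your proposal is correct and follows essentially the same route as the paper: the uniform lower bound $R_\mu(x_{1:n})\ge Kc^n$ gives a deterministic linear upper bound on $-\log R_\mu(X_{1:n})$, one applies the reverse Fatou lemma (which you make explicit by subtracting from the bound and invoking Fatou for the non-negative sequence $Y_n$) together with strong universality to get $\limsup_{n\to\infty}\mean\kwad{-\log R_\mu(X_{1:n})}/n\le h_\mu$, and then combines this with (\ref{SourceExp}). No gap; the only difference is presentational.
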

\begin{proof}
  Since $R_\mu(x_{1:n})\ge Kc^n$, we have
  $\kwad{-\log R_\mu(x_{1:n})}/n\le -\log c-\log K$.
  Hence by the Fatou lemma with the minus sign and by the strong
  universality of measure $R$, we obtain
  \begin{align}
    \limsup_{n\to\infty} \frac{\mean\kwad{-\log R_\mu(X_{1:n})}}{n}
    \le
    \mean
    \limsup_{n\to\infty} \frac{\kwad{-\log R_\mu(X_{1:n})}}{n}
    =
    h_\mu.
  \end{align}
  Combining this with inequality (\ref{SourceExp}) yields the claim.
\end{proof}

A careful reader may have noticed that the above proof applies the
Fatou lemma rather than the dominated convergence. The latter is
usually invoked for proving universality with respect to the counting
reference measure.

\begin{proposition}
  \label{theoUniIP}
  A measure universal in expectation is universal in probability.
\end{proposition}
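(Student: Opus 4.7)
The plan is to decompose $\kwad{-\log R_\mu(X_{1:n})}/n$ as the sum of $\kwad{-\log P_\mu(X_{1:n})}/n$, which converges almost surely to $h_\mu$ by the SMB theorem (\ref{SMB}) applied to $P\in\mathbb{E}(\mu)$, and the normalized log-likelihood ratio $V_n/n$, where $V_n:=\log\kwad{(dP_n/dR_n)(X_{1:n})}=\log\kwad{P_\mu(X_{1:n})/R_\mu(X_{1:n})}$. Since almost-sure convergence implies convergence in probability, it then suffices to show $V_n/n\to 0$ in probability.

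First, I would compare expectations at the level of blocks. One has $\mean V_n=D(P_n||R_n)\ge 0$, whereas $\mean\kwad{-\log P_\mu(X_{1:n})}/n=h_\mu(n)/n$, which decreases to $h_\mu$ by the Fekete lemma. Subtracting these from the hypothesis $\mean\kwad{-\log R_\mu(X_{1:n})}/n\to h_\mu$ yields $D(P_n||R_n)/n\to 0$. So $V_n/n$ converges to $0$ in mean.

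Second, I would upgrade this mean convergence to convergence in probability using Barron's inequality (\ref{Barron}) on both tails. The lower tail is immediate: taking $m=b^{n\epsilon}$ (with $b$ the logarithm base) in (\ref{Barron}) gives $P(V_n/n\le -\epsilon)\le b^{-n\epsilon}\to 0$. Integrating the same tail estimate also bounds the negative part in expectation uniformly, $\mean V_n^-\le C$ for a constant $C$ depending only on $b$. Hence $\mean V_n^+=\mean V_n+\mean V_n^-\le D(P_n||R_n)+C$, and Markov's inequality gives $P(V_n/n\ge\epsilon)\le \mean V_n^+/(n\epsilon)\to 0$. Combining the two tail bounds with the a.s.\ SMB convergence of $\kwad{-\log P_\mu(X_{1:n})}/n$ to $h_\mu$ yields (\ref{UniIP}).

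The main obstacle I expect is the upper deviation of $V_n/n$, since universality in expectation only controls averages and $V_n$ could a priori take very large positive values on a rare set. The key trick is that Barron's one-sided bound is strong enough to control $\mean V_n^-$ by a universal constant, which converts the hypothesis on $\mean V_n$ into a bound on $\mean V_n^+$ and thus into a Markov-type tail estimate. The remaining steps are routine assembly.
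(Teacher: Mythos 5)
Your proposal is correct and follows essentially the same route as the paper: reduce to the one-sided deviation of $\log\kwad{P_\mu(X_{1:n})/R_\mu(X_{1:n})}/n$, control the lower tail by Barron's inequality, and control the upper tail by Markov's inequality after writing the positive part as the (vanishing, by universality in expectation and Fekete) full expectation plus the negative part, which integrating Barron's bound shows is uniformly bounded by $\log e$. The only cosmetic difference is that you treat the lower tail directly in probability, whereas the paper invokes the almost-sure bound (\ref{SourceAS}); these are equivalent for the purpose of (\ref{UniIP}).
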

\begin{proof}
  Let measure $R$ be universal in expectation. Let an $\epsilon>0$. In
  view of the asymptotic equipartition (\ref{SMB}) and inequality
  (\ref{SourceAS}), it is sufficient
  to show 
  \begin{align}
    \lim_{n\to\infty} P\okra{-\frac{\log R_\mu(X_{1:n})}{n}
    +\frac{\log P_\mu(X_{1:n})}{n}\ge \epsilon}=0.
  \end{align}
  Hence, by the Markov inequality,
  $P(X\ge\epsilon)\le\mean X_+/\epsilon$, it suffices to prove
  \begin{align}
    \lim_{n\to\infty}
    \frac{1}{n}\mean\kwad{\log \frac{P_\mu(X_{1:n})}{R_\mu(X_{1:n})}}_+=0.
  \end{align}
  But this follows from equality
  \begin{align}
    \frac{1}{n}\mean\kwad{\log
    \frac{P_\mu(X_{1:n})}{R_\mu(X_{1:n})}}_+
    =
    \frac{1}{n}\mean\kwad{\log
    \frac{P_\mu(X_{1:n})}{R_\mu(X_{1:n})}}
    +
    \frac{1}{n}\mean\kwad{\log
    \frac{P_\mu(X_{1:n})}{R_\mu(X_{1:n})}}_-.
  \end{align}
  The first term on the right hand side tends to $0$ by the
  universality in expectation whereas the second term tends to $0$
  since
  \begin{align}
    \mean\kwad{\log
    \frac{P_\mu(X_{1:n})}{R_\mu(X_{1:n})}}_-
    &=
    \int_0^\infty P\okra{\log
      \frac{P_\mu(X_{1:n})}{R_\mu(X_{1:n})}\le -u}du
      \nonumber\\
    &\le
    \int_0^\infty \frac{d(\log m)}{m}
    =\int_1^\infty \frac{dm}{m^2}\log e=\log e
  \end{align}
  is uniformly bounded by the Barron inequality (\ref{Barron}).
\end{proof}

Thus it is sufficient usually to demonstrate strong universality and a
uniform lower bound on a given density. Other flavors of universality
follow hence automatically.

\subsection{Finite alphabet}

For a finite alphabet, universal measures are known to exist.  We
recall an example of a universal measure for this case, called the PPM
(prediction by partial matching). The PPM measure was discovered
gradually.  Cleary and Witten \cite{ClearyWitten84} considered roughly
Markov approximations $\PPM_k$ defined in (\ref{PPMk}) and coined name
PPM, which we prefer as more distinctive. Ryabko
\cite{Ryabko88en2,Ryabko08} considered the infinite series $\PPM$
defined in (\ref{PPM}), called it the $R$-measure, and proved that it
is universal. To be precise, Ryabko used the Krichevsky-Trofimov
smoothing ($+1/2$) rather than the Laplace smoothing ($+1$) applied in
(\ref{PPMk}). This difference is minor and does not affect
universality.

The definition that we use is as follows.
\begin{definition}[PPM density]
  Let the alphabet be $\mathbb{X}=\klam{a_1,...,a_D}$. Adapting the
  definitions from works
  \cite{ClearyWitten84,Ryabko88en2,Ryabko08,Debowski18}, the PPM
  density of order $k\ge 0$ is defined as
  \begin{align}
   \label{PPMk}
    \PPM_k^D(x_{1:n})
    &:=
      \begin{cases}
        \displaystyle
      D^{-k-1}\prod_{i=k+2}^n
      \frac{N(x_{i-k:i}|x_{1:i-1})+1}{N(x_{i-k:i-1}|x_{1:i-2})+D},
      & k\le n-2,
      \\
      D^{-n}, & k\ge n-1,
      \end{cases}
  \end{align}
  where the frequency of a substring $w_{1:k}$ in a string $x_{1:n}$ is
\begin{align}
    N(w_{1:k}|x_{1:n}):=\sum_{i=1}^{n-k+1}\boole{x_{i:i+k-1}=w_{1:k}}.
\end{align}
Subsequently, we define the (total) PPM density as
\begin{align}
  \label{PPM}
  \PPM^D(x_{1:n})
  &:=
  \sum_{k=0}^\infty
    w_k\PPM_k^D(x_{1:n})
    ,
  \quad
  w_k:=\frac{1}{k+1}-\frac{1}{k+2}.
\end{align}  
\end{definition}

Infinite series (\ref{PPM}) is a sum over finitely many distinct terms
and it is computable in the sense of computability theory, see
\cite{DebowskiSteifer22}, since we have
\begin{align}
  \label{PPMbound}
  \PPM_k^D(x_{1:n})=D^{-n} \text{ for } k\ge L(x_{1:n}),
\end{align}
where $L(x_{1:n})$ is the maximal length of a repetition in $x_{1:n}$,
\begin{align}
  L(x_{1:n}):=\max\klam{k\ge 0: x_{i+1:i+k}=x_{j+1:j+k} \text{ for
  some }
  0\le i<j\le n-k},
\end{align}
an important information-theoretic statistic in its own right
\cite[Chapter 9]{Debowski21}.

Let us show that the total PPM density yields a probability measure.
\begin{theorem}
  There exists a measure $R$ such that $R(x_{1:n})=\PPM^D(x_{1:n})$.
\end{theorem}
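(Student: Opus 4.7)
The plan is to show that, for each $k$, the densities $\PPM_k^D(x_{1:n})$ form a Kolmogorov-consistent family of finite-dimensional distributions, apply the Kolmogorov extension theorem to obtain measures $R_k$, and then form the convex combination $R = \sum_{k=0}^\infty w_k R_k$, which is legitimate because the weights $w_k = \frac{1}{k+1}-\frac{1}{k+2}$ telescope to $1$.

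The heart of the argument is verifying, for each fixed $k\ge 0$, the marginalization identity
\begin{align}
  \sum_{x_n\in\mathbb{X}} \PPM_k^D(x_{1:n}) = \PPM_k^D(x_{1:n-1}),
\end{align}
together with $\sum_{x_1}\PPM_k^D(x_1)=1$. I would split into cases according to the definition (\ref{PPMk}). When $n\le k+1$ both sides are trivially $D^{-(n-1)}$ using $\sum_{x_n}D^{-n}=D^{-(n-1)}$. When $n\ge k+3$, both $x_{1:n}$ and $x_{1:n-1}$ fall in the product-formula regime, and the ratio reduces to showing
\begin{align}
  \sum_{x_n}\bigl[N(x_{n-k:n}|x_{1:n-1})+1\bigr] = N(x_{n-k:n-1}|x_{1:n-2})+D,
\end{align}
which follows from the combinatorial identity
$\sum_{x_n} N(x_{n-k:n}|x_{1:n-1}) = N(x_{n-k:n-1}|x_{1:n-2})$; this latter identity is immediate because summing the indicator $\boole{x_{i:i+k}=x_{n-k:n}}$ over $x_n$ (with $x_{i+k}$ already appearing in the text) collapses the constraint on the last coordinate.

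The one genuinely delicate step, which I expect to be the main obstacle, is the boundary case $n=k+2$, where the left side uses the product formula with a single factor involving length-$(k+1)$ substrings of a length-$(k+1)$ text, while the right side is the flat $D^{-(k+1)}$. Here the counts $N(x_{2:k+2}|x_{1:k+1})$ reduce to single indicators of the form $\boole{x_{1:k+1}=x_{2:k+2}}$, and a short lemma shows $\sum_{x_{k+2}}\boole{x_{1:k+1}=x_{2:k+2}} = \boole{x_{1:k}=x_{2:k+1}}$, after which numerator and denominator cancel and one obtains $D^{-(k+1)}$ as required.

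Once consistency is in place for each $k$, the Kolmogorov extension theorem on the countably generated space $(\mathbb{X}^{\mathbb{Z}},\mathcal{X}^{\mathbb{Z}})$ produces probability measures $R_k$ with $R_k(X_{1:n}=x_{1:n})=\PPM_k^D(x_{1:n})$. Since $\sum_{k=0}^\infty w_k=1$ by telescoping, the set function $R(A):=\sum_{k=0}^\infty w_k R_k(A)$ is a probability measure by the monotone convergence theorem for series of nonnegative measures, and its finite-dimensional marginals coincide with $\PPM^D(x_{1:n})$ by construction, establishing the claim.
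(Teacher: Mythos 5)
Your proposal is correct and follows essentially the same route as the paper: the substance in both cases is the per-order marginalization identity $\sum_{x_n}\PPM_k^D(x_{1:n})=\PPM_k^D(x_{1:n-1})$ together with the Kolmogorov extension theorem and the fact that the weights $w_k$ telescope to $1$. The only (immaterial) difference is that you extend each $\PPM_k^D$ to a measure $R_k$ and then mix, whereas the paper mixes the densities first (interchanging sum and marginalization by monotone convergence) and extends once; your case analysis, including the boundary case $n=k+2$, correctly supplies the verification that the paper compresses into ``follows by the definition of $\PPM_k^D$.''
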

\noindent
\emph{Remark:} This measure $R$ will be denoted $\PPM^D$.
\begin{proof}
  By the Kolmogorov process theorem, it suffices to show that
  \begin{align}
    \sum_{x_{n+1}\in\mathbb{X}} \PPM^D(x_{1:n+1})=\PPM^D(x_{1:n}).
  \end{align}
  But this follows by the monotone convergence from
  \begin{align}
    \sum_{x_{n+1}\in\mathbb{X}} \PPM_k^D(x_{1:n+1})=\PPM_k^D(x_{1:n}),
  \end{align}
  which in turn follows by the definition of $\PPM_k^D(x_{1:n})$.
\end{proof}

The strong universality of the total PPM measure follows by the
Stirling approximation and by the Birkhoff ergodic theorem. Moreover,
we have the uniform lower bound
\begin{align}
  \PPM^D(x_{1:n})\ge w_{n-1}\PPM_{n-1}^D(x_{1:n})
  \ge \frac{D^{-n}}{(n+1)^2}>\frac{1}{4}\,(2D)^{-n}.
\end{align}
Hence by Propositions \ref{theoUniExp} and \ref{theoUniIP} follows
also the universality in expectation and in probability.
\begin{theorem}[\cite{Ryabko88en2,Ryabko08,Debowski18}]
  Measure $\PPM^D$ is universal with respect to alphabet
  $\mathbb{X}=\klam{a_1,...,a_D}$.
\end{theorem}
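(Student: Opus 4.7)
The plan is to reduce all three flavors of universality to strong universality. By inequality~(\ref{SourceAS}) applied to $R=\PPM^D$ we already have
\begin{align*}
  \liminf_{n\to\infty}\kwad{-\log\PPM^D(X_{1:n})}/n\ge h \text{ a.s.},
\end{align*}
so the content of strong universality is the matching upper bound on the $\limsup$. The uniform lower bound $\PPM^D(x_{1:n})>\frac{1}{4}(2D)^{-n}$ displayed in the excerpt, combined with Proposition~\ref{theoUniExp} (with $K=1/4$ and $c=(2D)^{-1}$) and then Proposition~\ref{theoUniIP}, will upgrade strong universality to universality in expectation and in probability for free.

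For the $\limsup$ bound I exploit the mixture structure. Since $\PPM^D(x_{1:n})\ge w_k\,\PPM_k^D(x_{1:n})$ holds for every $k\ge 0$,
\begin{align*}
  -\frac{1}{n}\log\PPM^D(X_{1:n})\le -\frac{\log w_k}{n}-\frac{1}{n}\log\PPM_k^D(X_{1:n}),
\end{align*}
and the first term vanishes as $n\to\infty$ for each fixed $k$. Hence everything reduces to analyzing the Markov approximations $\PPM_k^D$ one $k$ at a time.

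The central technical step is to establish, for every $k\ge 0$ and every $P\in\mathbb{E}$ over the finite alphabet,
\begin{align*}
  \lim_{n\to\infty}\kwad{-\frac{1}{n}\log\PPM_k^D(X_{1:n})}=H_k \text{ a.s.,}
\end{align*}
where $H_k:=\mean\kwad{-\log P(X_0|X_{-k:-1})}$ is the $k$-th order Markov entropy rate. To obtain this I would group the factors in the product form~(\ref{PPMk}) of $\PPM_k^D$ according to the $k$-length context they extend and apply Stirling's formula; after this bookkeeping $-\frac{1}{n}\log\PPM_k^D(X_{1:n})$ becomes an empirical conditional entropy of order $k$ plus an $O((\log n)/n)$ correction coming from the Laplace smoothing offsets $(+1,+D)$. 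The Birkhoff ergodic theorem takes the empirical frequencies $N(w_{1:k+1}|X_{1:n})/n$ to the true probabilities $P(X_{1:k+1}=w_{1:k+1})$ a.s.\ for every cylinder $w_{1:k+1}\in\mathbb{X}^{k+1}$, and continuity of $p\mapsto -p\log p$ on the finite simplex transports this pointwise convergence to the limit $H_k$.

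Finally I let $k\to\infty$. The remark in the excerpt following~(\ref{SMB}) gives $H_k\downarrow h$ via the reversed-filtration martingale for $P(X_0|X_{-k:-1})$, so taking the infimum over $k$ in the chain above yields
\begin{align*}
  \limsup_{n\to\infty}\kwad{-\frac{1}{n}\log\PPM^D(X_{1:n})}\le\inf_{k\ge 0} H_k=h \text{ a.s.},
\end{align*}
which finishes the strong universality part. The main obstacle is the Stirling bookkeeping that identifies $-\frac{1}{n}\log\PPM_k^D$ with an empirical conditional block entropy up to a vanishing error; the Laplace offsets $(+1,+D)$ need to be tracked carefully so that the correction stays $O((\log n)/n)$ uniformly over the finitely many contexts of length $k$. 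The remaining ingredients — the three-way reduction across flavors, Birkhoff for empirical frequencies on a finite alphabet, and the conditional-entropy convergence $H_k\downarrow h$ — are routine once $H_k$ has been identified as the right intermediate quantity.
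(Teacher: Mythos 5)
Your proposal is correct and follows essentially the same route as the paper, which only sketches the argument: strong universality of $\PPM^D$ via the Stirling approximation and the Birkhoff ergodic theorem applied to the Markov approximations $\PPM_k^D$ (with $H_k\downarrow h$ and the mixture bound $\PPM^D\ge w_k\PPM_k^D$), and then the uniform lower bound $\PPM^D(x_{1:n})>\frac{1}{4}(2D)^{-n}$ combined with Propositions \ref{theoUniExp} and \ref{theoUniIP} to obtain universality in expectation and in probability. Your write-up simply fills in the standard bookkeeping that the paper leaves implicit.
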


\section{Main results}
\label{secMain}

In this section, we exhibit the main result. We show that universal
measures exist if the reference measure is an arbitrary finite
measure. We do it by an effective construction. Our constructive
example of a universal measure is called the NPD (non-parametric
differential) measure to honor the quantization idea by Feutrill and
Roughan \cite{FeutrillRoughan21}. They carried out the construction of
the NPD estimator half-way---as detailed in Section
\ref{secIntroduction}.

Let us proceed to the construction of the NPD measure.  Let
notation $\mathcal{X}_l\uparrow\mathcal{X}$ denote a filtration of a
$\sigma$-field $\mathcal{X}$, i.e., a sequence of nested
$\sigma$-fields $\mathcal{X}_l\subset \mathcal{X}_{l+1}$ where
$l=0,1,2,...$ and
$\sigma(\bigcup_{l\ge 0} \mathcal{X}_l)=\mathcal{X}$.  Assuming that
the reference measure $\mu$ is a finite measure on a countably
generated measurable space, we will demonstrate universality of the
following constructive object.
\begin{definition}[NPD density]
  Let $(\mathbb{X},\mathcal{X},\mu)$ be a countably generated finite
  measure space.  Let $\mathcal{X}_l\uparrow\mathcal{X}$ where
  $l=0,1,2,...$ be a filtration where the $\sigma$-fields
  $\mathcal{X}_l$ are finite with
  $\mathcal{X}_0=\klam{\mathbb{X},\emptyset}$. Such a filtration
  exists since $\mathcal{X}$ is countably generated. Let $\chi_l$ be
  the finite partitions that generate $\sigma$-fields $\mathcal{X}_l$
  respectively.  We treat classes $\chi_l$ as finite alphabets of
  symbols $A\in \chi_l$.  We introduce quantizations of points
  $x\in\mathbb{X}$ as symbols $x^l:=A$ for $x\in A\in
  \chi_l$. Moreover, for $l=0,1,2,...$, let $R^l$ be certain measures
  that are universal for alphabets $\chi_l$.  We define the NPD
  density of order $l\ge 0$ as
  \begin{align}
  \label{NPDml}
    \NPD^l_\mu(x_{1:n})&:=\frac{R^l(x_{1:n}^l)}{\prod_{i=1}^n\mu(x_i^l)}.
  \end{align}
  Subsequently, we define the (total) NPD density as
  \begin{align}
    \label{NPDm}
    \NPD_\mu(x_{1:n})&:=\sum_{l=0}^\infty w_l \NPD^l_\mu(x_{1:n}),
    \quad
    w_l:=\frac{1}{l+1}-\frac{1}{l+2}.
  \end{align}
\end{definition}

Let us note that the NPD measure depends implicitly on filtration
$\mathcal{X}_l\uparrow\mathcal{X}$, universal measures $R^l$ for
finite alphabets, and reference measure $\mu$.  Actually, for the
universality of the NPD density, it does not matter which universal
measures $R^l$ we use in definition (\ref{NPDml}).  There is some
analogy between the PPM series (\ref{PPM}) and the NPD series
(\ref{NPDm}).  Weights $w_k$ in series (\ref{PPM}) weigh different
Markov approximations, whereas weights $w_l$ in series (\ref{NPDm})
weigh different quantization levels.  Thus, we may say that our
development of the quantization idea by Feutrill and Roughan
\cite{FeutrillRoughan21} is analogical to Ryabko's
\cite{Ryabko88en2,Ryabko08} development of the PPM measures by Cleary
and Witten \cite{ClearyWitten84}. Whereas Cleary and Witten
\cite{ClearyWitten84} and Feutrill and Roughan
\cite{FeutrillRoughan21} considered approximations of a fixed order,
the order meaning the Markov order or the quantization level
respectively, the idea of Ryabko \cite{Ryabko88en2,Ryabko08} and of us
is to apply a mixture of infinitely many orders. As we have seen, this
guarantees that the total PPM measure is universal and it is
reasonable to expect that so is the total NPD density.

Before we demonstrate universality, let us take a closer look at
the NPD densities.  The total NPD density is measurable and finite
$\mu$-almost everywhere, as it follows by the monotone
convergence. Just an explicit proof for a sanity check.
\begin{theorem}
  We have $\NPD_\mu(x_{1:n})<\infty$ for $\mu^n$-almost all $x_{1:n}$.
\end{theorem}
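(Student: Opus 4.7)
The plan is to show the stronger fact that the total NPD density integrates to $1$ against $\mu^n$, from which finiteness $\mu^n$-almost everywhere is immediate. By monotone convergence we may pull the sum in (\ref{NPDm}) outside the integral, so the task reduces to computing
\begin{align}
\int \NPD^l_\mu(x_{1:n})\,d\mu^n(x_{1:n})
\end{align}
for each fixed quantization level $l$.

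First I would split the integration domain $\mathbb{X}^n$ along the product partition $\chi_l^n$. Inside a single cell $A_1\times\cdots\times A_n$, the quantizations $x_i^l=A_i$ are constant, so the integrand $R^l(x_{1:n}^l)/\prod_{i=1}^n\mu(x_i^l)$ is the constant $R^l(A_{1:n})/\prod_{i=1}^n\mu(A_i)$. Multiplying by the $\mu^n$-measure $\prod_{i=1}^n\mu(A_i)$ of the cell gives exactly $R^l(A_{1:n})$. Cells with $\mu(A_i)=0$ for some $i$ contribute a $\mu^n$-null set and may be ignored (one only needs to declare the integrand, say, equal to zero there, which is a $\mu^n$-a.e.\ convention). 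Summing over all cells and using that $R^l$ is a probability measure on the finite alphabet $\chi_l^n$,
\begin{align}
\int \NPD^l_\mu\,d\mu^n = \sum_{A_1,\ldots,A_n\in\chi_l} R^l(A_{1:n}) = 1.
\end{align}

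Then monotone convergence applied to the partial sums of (\ref{NPDm}) yields
\begin{align}
\int \NPD_\mu(x_{1:n})\,d\mu^n(x_{1:n}) = \sum_{l=0}^\infty w_l\int \NPD^l_\mu\,d\mu^n = \sum_{l=0}^\infty w_l = 1,
\end{align}
since the weights telescope. A $\mu^n$-integrable nonnegative function is finite $\mu^n$-a.e., so $\NPD_\mu(x_{1:n})<\infty$ for $\mu^n$-almost all $x_{1:n}$, as required. Measurability follows from the fact that each $\NPD^l_\mu$ is a simple $\mathcal{X}_l^{\otimes n}$-measurable function (a ratio of constants on each product cell) and a countable sum of measurable functions is measurable.

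The only mildly delicate point is the bookkeeping at cells of zero reference measure, where $\prod_{i=1}^n \mu(x_i^l)$ vanishes and the defining ratio is only conventional. This is harmless because such cells form a $\mu^n$-null set and do not contribute to the integral; one could also note that universality of $R^l$ is never invoked here---only that $R^l$ is a probability measure---so the argument is purely a matter of Fubini/Tonelli combined with the observation that $\NPD^l_\mu$ is the standard density of $R^l$ (viewed as a measure on $\mathbb{X}^n$ via the coarse $\sigma$-field $\mathcal{X}_l^{\otimes n}$) with respect to $\mu^n$.
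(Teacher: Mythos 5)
Your proposal is correct and follows essentially the same route as the paper: show $\int \NPD^l_\mu\,d\mu^n=1$ for each level $l$, sum over $l$ via monotone convergence to get total integral $1$, and conclude finiteness $\mu^n$-almost everywhere. Your extra bookkeeping (the cell-by-cell computation over $\chi_l^n$ and the treatment of zero-measure cells) merely fills in a step the paper states without detail.
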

\begin{proof}
  For each $l\ge 0$, we have
  \begin{align}
    \int \NPD^l_\mu(x_{1:n}) d\mu^n(x_{1:n})=\sum_{x_{1:n}^l}R^l(x_{1:n}^l)=1.
  \end{align}
  Since $\NPD^l_\mu(x_{1:n})\ge 0$, hence by the monotone convergence,
  we obtain
  \begin{align}
    \int \NPD_\mu(x_{1:n}) d\mu^n(x_{1:n})=
    \sum_{l=0}^\infty w_l \int \NPD^l_\mu(x_{1:n}) d\mu^n(x_{1:n})=1.
  \end{align}
  Since the integral is finite, the integrand is finite almost
  everywhere.
\end{proof}

Although the total NPD density can be divergent for particular tuples
$x_{1:n}$, we can control its finiteness pretty well in some important
cases.
\begin{example}[dyadic partitions]
  \label{exQuantiles}
  Let the universal measures in (\ref{NPDml}) be the PPM measures,
  $R^l=\PPM^{\card \chi_l}$.  Then
\begin{align}
  \NPD^l_\mu(x_{1:n})
  &=\frac{(\card \chi_l)^{-n}}{\prod_{i=1}^n\mu(x_i^l)}
    \text{ for } l\ge M(x_{1:n}),
\end{align}
where $M(x_{1:n})$ is the minimal quantization level that puts points
$x_1,...,x_n$ into different bins,
\begin{align}
 M(x_{1:n}):=\min\klam{l\ge 0: L(x_{1:n}^l)=0}.
\end{align}
A particularly regular case arises for uniformly dyadic partitions:
\begin{align}
  \label{Dyadic}
  \card \chi_l=2^l\text{ and }\mu(x_i^l)=2^{-l}.
\end{align}
Such partitioning is feasible if the reference measure $\mu$ is a
non-atomic probability measure, such as the normal distribution
$N(m,\sigma^2)$ to be discussed in Section \ref{secReal}. Then
$\NPD^l_\mu(x_{1:n})=1$ for $l\ge M(x_{1:n})$ and series
$\NPD_\mu(x_{1:n})$ is finite if $M(x_{1:n})$ is finite, whereas
statistic $M(X_{1:n})$ is finite $\mu^{\mathbb{Z}}$-almost surely.
\end{example}

Universality was stated in Definitions
\ref{defiUniAS}--\ref{defiUniIP} as a property of measures rather than
their densities. Thus, let us see the following statement.
\begin{theorem}
  There exists a measure $R$ such that
  $R_\mu(x_{1:n})=\NPD_\mu(x_{1:n})$.
\end{theorem}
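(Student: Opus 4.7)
The plan is to verify that the densities $\NPD_\mu(x_{1:n})$ form a consistent family of finite-dimensional distributions and then invoke the Kolmogorov process theorem exactly as was done above for $\PPM^D$. That is, I would show
\begin{align}
 \int_{\mathbb{X}} \NPD_\mu(x_{1:n+1})\, d\mu(x_{n+1}) = \NPD_\mu(x_{1:n})
\end{align}
and combine this with the already-proved fact that $\int \NPD_\mu(x_{1:n})\, d\mu^n(x_{1:n})=1$ to conclude.

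First I would handle a single level. For fixed $l\ge 0$, I would split the integral over the finite partition $\chi_l$: for $x_{n+1}\in A\in \chi_l$ the quantization $x_{n+1}^l$ equals $A$ and the integrand $\NPD^l_\mu(x_{1:n+1})$ is constant in $x_{n+1}$ on $A$. Integrating over $A$ contributes a factor $\mu(A)$ which cancels exactly the corresponding $\mu(x_{n+1}^l)=\mu(A)$ in the denominator of (\ref{NPDml}). What remains is
\begin{align}
 \sum_{A\in \chi_l} \frac{R^l(x_{1:n}^l, A)}{\prod_{i=1}^n \mu(x_i^l)} = \frac{R^l(x_{1:n}^l)}{\prod_{i=1}^n \mu(x_i^l)} = \NPD^l_\mu(x_{1:n}),
\end{align}
where the first equality is consistency of the finite-dimensional marginals of the probability measure $R^l$ on the finite alphabet $\chi_l$, already established in the PPM construction. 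Next I would pass to the sum in (\ref{NPDm}). Since $\NPD^l_\mu\ge 0$ and $w_l\ge 0$, the monotone convergence theorem (applied to the counting measure on $l$) swaps the sum over $l$ with $\int d\mu(x_{n+1})$, and the single-level identity gives the required consistency for $\NPD_\mu$.

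With consistency in hand, the Kolmogorov process theorem applied to the countable index set $\mathbb{Z}$ and to the countably generated measurable space $(\mathbb{X}, \mathcal{X})$ yields a measure $R$ whose finite-dimensional marginals have densities $\NPD_\mu(x_{1:n})$ with respect to $\mu^n$; alternatively, one may avoid any regularity issue by constructing $R$ as the explicit mixture $\sum_l w_l R^l$, where each $R^l$ is obtained by sampling the labels $(X_i^l)_{i\in\mathbb{Z}}$ from the universal measure $R^l$ on $\chi_l^{\mathbb{Z}}$ and then refining each coordinate $X_i$ inside its label $A$ according to the probability measure $\mu(\cdot\cap A)/\mu(A)$ (both factors of which are well defined since $\mu$ is finite and $R^l$ only charges labels with $\mu(A)>0$). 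The main obstacle, and really the only non-routine point, is the single-level consistency computation above: one must recognize that the partition structure makes the integration over $x_{n+1}$ reduce to a sum over $\chi_l$ in which the reference-measure factors cancel, thereby reducing the continuous claim to the already-known discrete consistency of $R^l$.
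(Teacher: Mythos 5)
Your proposal is correct and takes essentially the same route as the paper: verify the one-coordinate consistency $\int_{\mathbb{X}}\NPD_\mu(x_{1:n+1})\,d\mu(x_{n+1})=\NPD_\mu(x_{1:n})$ level by level (your explicit reduction over the cells of $\chi_l$, with the $\mu(A)$ factors cancelling, is exactly what the paper's terse ``since each $R^l$ is a measure'' abbreviates), interchange the sum over $l$ with the integral by monotone convergence, and invoke the Kolmogorov process theorem. Only your optional alternative construction leans on the claim that $R^l$ charges no $\mu$-null cells, which need not hold for PPM-type measures, but the main argument does not use it.
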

\noindent
\emph{Remark:} This measure $R$ will be denoted $\NPD$.
\begin{proof}
  We may construct measures
  \begin{align}
    R_n(A):=\int_{x_{1:n}\in A} \NPD_\mu(x_{1:n})d\mu^n(x_{1:n}).
  \end{align}
  To show that measures $R_n$ induce measure $R$ on infinite
  sequences, by the Kolmogorov process theorem, it suffices to show
  that $R_{n+1}(A\times\mathbb{X})=R_n(A)$. In turn, using the Fubini
  theorem this is implied by condition
  \begin{align}    
    \int_{x_{n+1}\in\mathbb{X}}
    \NPD_\mu(x_{1:n+1})d\mu(x_{n+1})=\NPD_\mu(x_{1:n}), 
  \end{align}
  The above follows by the monotone convergence from
  \begin{align}
    \int_{x_{n+1}\in\mathbb{X}}
    \NPD^l_\mu(x_{1:n+1})d\mu(x_{n+1})=\NPD^l_\mu(x_{1:n}),
  \end{align}
  which is true since each $R^l$ in (\ref{NPDml}) is a measure.
\end{proof}

In order to prove universality of the total NPD measure, we
will apply a lemma that concerns convergence of $f$-divergences for
filtrations:
\begin{lemma}[\mbox{\cite[Chapter 3, Problem 4]{Debowski21}}]
  \label{theoConvexMartingale} For an interval $A$, let
  $f:A\rightarrow[0,\infty]$ be a non-negative, continuous, and convex
  measurable function, let $\nu\ll \rho$ be two finite measures on a
  measurable space, and let $\mathcal{G}_n\uparrow \mathcal{G}$ be a
  filtration. We have
  \begin{align}
    \label{ConvexMartingale}
    \lim_{n\rightarrow\infty}
    \int f\okra{\frac{d\nu|_{\mathcal{G}_n}}{d\rho|_{\mathcal{G}_n}}}d\rho
    =
    \int f\okra{\frac{d\nu|_{\mathcal{G}}}{d\rho|_{\mathcal{G}}}}d\rho
    ,
  \end{align}
  where the sequence on the left hand side is increasing.
\end{lemma}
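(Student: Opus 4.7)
The plan is to reduce the claim to martingale convergence for the Radon--Nikodym densities, with convexity of $f$ supplying the one-sided comparison and the ``increasing'' clause. Write $Y_n:=d\nu|_{\mathcal{G}_n}/d\rho|_{\mathcal{G}_n}$ and $Y_\infty:=d\nu|_{\mathcal{G}}/d\rho|_{\mathcal{G}}$. The first step is to observe that for every $G\in\mathcal{G}_n$ we have $\int_G Y_n\,d\rho=\nu(G)=\int_G Y_\infty\,d\rho$, and since $Y_n$ is $\mathcal{G}_n$-measurable this forces $Y_n=\mathbf{E}_\rho\okra{Y_\infty\mid\mathcal{G}_n}$ $\rho$-a.s. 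Hence $(Y_n)$ is a uniformly integrable $\rho$-martingale for the filtration $\mathcal{G}_n\uparrow\mathcal{G}$, and Doob's martingale convergence theorem gives $Y_n\to Y_\infty$ both $\rho$-a.s.\ and in $L^1(\rho)$.

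The second step uses convexity. The conditional Jensen inequality applied to $f$ gives
\[
f(Y_n)=f\okra{\mathbf{E}_\rho\okra{Y_\infty\mid\mathcal{G}_n}}\le \mathbf{E}_\rho\okra{f(Y_\infty)\mid\mathcal{G}_n},
\]
so integrating against $\rho$ yields the upper bound $\int f(Y_n)\,d\rho\le \int f(Y_\infty)\,d\rho$. The same argument applied between consecutive levels, $Y_n=\mathbf{E}_\rho\okra{Y_{n+1}\mid\mathcal{G}_n}$, shows that $\okra{f(Y_n)}$ is a nonnegative $\rho$-submartingale, so the sequence of integrals is nondecreasing in $n$; this verifies the ``increasing'' clause of the lemma.

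The third step is the matching lower bound. Continuity of $f$ combined with the $\rho$-a.s.\ convergence $Y_n\to Y_\infty$ gives $f(Y_n)\to f(Y_\infty)$ $\rho$-a.s., so Fatou's lemma (valid because $f\ge 0$) yields $\int f(Y_\infty)\,d\rho\le \liminf_n \int f(Y_n)\,d\rho$. Together with the upper bound from Step~2 this settles the case $\int f(Y_\infty)\,d\rho<\infty$. If instead $\int f(Y_\infty)\,d\rho=\infty$, Fatou already forces $\lim_n \int f(Y_n)\,d\rho=\infty$, since a finite increasing limit would contradict the Fatou inequality.

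The main obstacle is more bookkeeping than genuine depth: one must justify that $Y_n$ coincides $\rho$-a.s.\ with the $\rho$-conditional expectation of $Y_\infty$---which relies on uniqueness of Radon--Nikodym densities on each $\mathcal{G}_n$ and on $\nu\ll\rho$ propagating to restrictions---and that $f$ viewed as a map into $[0,\infty]$ preserves measurability and the continuity-based a.s.\ limit under the martingale convergence. Once these two points are in place, martingale convergence and conditional Jensen supply everything, with no uniform domination of the densities $Y_n$ required.
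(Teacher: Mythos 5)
Your proof is correct and follows essentially the same route as the paper's own argument, which is only sketched in the remark after the lemma: martingale (L\'evy upward) convergence of the densities, the conditional Jensen inequality for the bound $\int f(Y_n)\,d\rho\le\int f(Y_\infty)\,d\rho$ and the monotonicity, and the Fatou lemma for the reverse inequality. Your write-up merely fills in the bookkeeping (identifying $Y_n$ as a conditional expectation, the submartingale step) that the paper leaves implicit.
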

\noindent 
\emph{Remark:} Lemma \ref{theoConvexMartingale} follows, via the
martingale convergence, by a synergy of the Fatou lemma and the Jensen
inequality. The Fatou lemma yields that the left hand side is larger
than the right hand side, whereas the Jensen inequality yields the
reversed inequality. The idea is the same as the proof of \cite[Lemma
2]{Debowski09}, which concerns continuity of conditional mutual
information for $\sigma$-fields.

Now we will derive the main result of this section. 
\begin{theorem}
  \label{theoFiniteAS}
  Measure $\NPD$ is universal with respect to the finite measure
  $\mu$.
\end{theorem}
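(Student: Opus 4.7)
The plan is to establish all three notions of universality from Definitions \ref{defiUniAS}--\ref{defiUniIP} by first proving strong universality directly and then exhibiting a uniform lower bound of the form $\NPD_\mu(x_{1:n})\ge Kc^n$ required by Proposition \ref{theoUniExp}, after which Proposition \ref{theoUniIP} upgrades expectation to probability. Throughout, I fix $P\in\mathbb{E}(\mu)$ and note that for each $l$ the quantized process $(X_i^l)_{i\in\mathbb{Z}}$ is stationary and ergodic on the finite alphabet $\chi_l$, because it is a factor of $(X_i)_{i\in\mathbb{Z}}$ through a measurable quantization map.

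For strong universality, the lower bound $\liminf_n\kwad{-\log\NPD_\mu(X_{1:n})}/n\ge h_\mu$ a.s.\ is immediate from Barron's inequality (\ref{SourceAS}) applied to the measure $\NPD$. For the matching upper bound, I exploit the termwise domination $\NPD_\mu(x_{1:n})\ge w_l\NPD_\mu^l(x_{1:n})$, giving
\begin{align}
  -\frac{1}{n}\log\NPD_\mu(X_{1:n})
  \le -\frac{\log w_l}{n}
  -\frac{1}{n}\log R^l(X_{1:n}^l)
  +\frac{1}{n}\sum_{i=1}^n \log\mu(X_i^l).
\end{align}
By the strong universality of $R^l$ on $\chi_l$, the second term on the right converges a.s.\ to the entropy rate $h^l$ of the quantized process. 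Since $\chi_l$ is finite and $P_1\ll\mu$, the summand $\log\mu(X_i^l)$ is a.s.\ bounded, so the Birkhoff ergodic theorem makes the third term converge a.s.\ to $\mean\kwad{\log\mu(X_1^l)}$. Writing $h_\mu^l:=h^l+\mean\kwad{\log\mu(X_1^l)}$ for the quantized differential entropy rate of $P$ with respect to $\mu$, I therefore obtain $\limsup_n\kwad{-\log\NPD_\mu(X_{1:n})}/n\le h_\mu^l$ a.s.\ for every $l\ge 0$.

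The key step, which I expect to be the main obstacle, is closing the gap by showing $\inf_{l\ge 0} h_\mu^l=h_\mu$. Here I would invoke Lemma \ref{theoConvexMartingale} with $f(x)=x\log x$, $\nu=P_n$, $\rho=\mu^n$, and the product filtration $\mathcal{X}_l^n\uparrow\mathcal{X}^n$. The lemma gives that $D(P_n|_{\mathcal{X}_l^n}\,||\,\mu^n|_{\mathcal{X}_l^n})$ increases in $l$ to $D(P_n\,||\,\mu^n)=-h_\mu(n)$. A direct unpacking of the restricted Radon-Nikodym derivative identifies this divergence at level $l$ with $-h_\mu^l(n)$, where $h_\mu^l(n):=h^l(n)+n\mean\kwad{\log\mu(X_1^l)}$ is the corresponding block-entropy analogue satisfying $h_\mu^l(n)/n\to h_\mu^l$. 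Consequently $h_\mu^l(n)\downarrow h_\mu(n)$ for each fixed $n$, and combining this with the Fekete representations $h_\mu^l=\inf_n h_\mu^l(n)/n$ and $h_\mu=\inf_n h_\mu(n)/n$, a short $\varepsilon/2$-interchange of the infima across $n$ and $l$ delivers $h_\mu^l\downarrow h_\mu$. This closes (\ref{UniAS}).

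Finally, the uniform lower bound required by Proposition \ref{theoUniExp} is available for free from the $l=0$ term: since $\mathcal{X}_0=\klam{\mathbb{X},\emptyset}$ forces $\chi_0=\klam{\mathbb{X}}$ and $R^0(x_{1:n}^0)\equiv 1$, we have $\NPD_\mu(x_{1:n})\ge w_0\mu(\mathbb{X})^{-n}=\frac{1}{2}\mu(\mathbb{X})^{-n}$, which is of the form $Kc^n$ because $\mu$ is finite. Propositions \ref{theoUniExp} and \ref{theoUniIP} then supply universality in expectation and in probability, completing the proof.
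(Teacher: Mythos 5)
Your proposal is correct and follows essentially the same route as the paper: Barron's inequality for the lower bound, termwise domination $\NPD_\mu\ge w_l\NPD_\mu^l$ plus strong universality of $R^l$ and the Birkhoff theorem for the level-$l$ upper bound, Lemma \ref{theoConvexMartingale} applied along the product filtration to get $\inf_{l}h_\mu^l=h_\mu$ (the interchange of infima over $n$ and $l$ is immediate, no $\varepsilon/2$ argument needed), and the $l=0$ term giving the uniform bound $\frac{1}{2}\mu(\mathbb{X})^{-n}$ feeding Propositions \ref{theoUniExp} and \ref{theoUniIP}. The only nominal discrepancy is that Lemma \ref{theoConvexMartingale} requires a non-negative convex $f$, so one should apply it to a shifted function such as $f(x)=\log 2+x\log x$ (as the paper does) rather than $f(x)=x\log x$ itself; since $\mu^n$ is finite the constant shift is harmless and your argument goes through unchanged.
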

\begin{proof}
  It suffices to show that for $P\in\mathbb{E}(\mu)$, we
  have
  \begin{align}
    \label{SufficientAS}
    \limsup_{n\to\infty} \frac{\kwad{-\log \NPD_\mu(X_{1:n})}}{n}
    \le
    h_\mu \text{ a.s.}
  \end{align}
  The strong universality follows hence by the converse bound
  (\ref{SourceAS}). By contrast, the universality in expectation and
  in probability follows by Propositions
  \ref{theoUniExp} and \ref{theoUniIP} and inequality
  \begin{align}
    \NPD_\mu(x_{1:n})\ge\frac{1}{2}\NPD^0_\mu(x_{1:n})
    =\frac{1}{2}\,\mu(\mathbb{X})^{-n}.
  \end{align}

  So as to demonstrate (\ref{SufficientAS}), we first observe that
  by the strong universality of measures $R^l$, we have
  \begin{align}
    \label{PPMAS}
    \lim_{n\to\infty} \frac{\kwad{-\log R^l(X_{1:n}^l)}}{n}
    &=
    \inf_{n\ge 1} \frac{\mean\kwad{-\log
    P_n(X_{1:n}^l)}}{n}
      \text{ a.s.}
  \end{align}
  On the other hand, by the Birkhoff ergodic theorem, we have
  \begin{align}
    \label{MUAS}
    \lim_{n\to\infty}\frac{1}{n}\sum_{i=1}^n\kwad{-\log\mu(X_i^l)}
    = C_l:=
    \mean \kwad{-\log\mu(X_i^l)} \text{ a.s.}
  \end{align}
  Moreover, each cross entropy $C_l$ is finite since each $C_l$ is a
  sum over finitely many finite elements---by $P_1\ll\mu$.  Denote
  quantities
  \begin{align}
    h^l_\mu(n):=
    \mean\kwad{-\log \frac{P_n(X_{1:n}^l)}{\prod_{i=1}^n\mu(X_i^l)}}
    =\mean\kwad{-\log P_n(X_{1:n}^l)}-n\,C_l.
  \end{align}  
  Since cross entropies $C_l$ are finite, equations (\ref{PPMAS}) and
  (\ref{MUAS}) imply
  \begin{align}
    \lim_{n\to\infty} \frac{\kwad{-\log \NPD^l_\mu(X_{1:n})}}{n}
    &= h^l_\mu:= \inf_{n\ge 1} \frac{h^l_\mu(n)}{n}
    \text{ a.s.}
  \end{align}
  Since $\NPD_\mu(x_{1:n})\ge w_l\NPD^l_\mu(x_{1:n})$ then for
  any $l\ge 0$, we obtain
  \begin{align}
    \limsup_{n\to\infty} \frac{\kwad{-\log \NPD_\mu(X_{1:n})}}{n}
    &\le
      \limsup_{n\to\infty} \frac{\kwad{-\log w_l\NPD^l_\mu(X_{1:n})}}{n}
      = h^l_\mu
    \text{ a.s.}
  \end{align}
  
  It remains to show that $\inf_{l\ge 0} h^l_\mu=h_\mu$. For this goal
  we observe that
  \begin{align}
    \frac{P_n(x_{1:n}^l)}{\prod_{i=1}^n\mu(x_i^l)}=
    \frac{dP_n|_{\mathcal{X}_l^n}}{d\mu^n|_{\mathcal{X}_l^n}}(x_{1:n}).   
  \end{align}
  Hence we have
  \begin{align}
    h^l_\mu=
    \inf_{n\ge 1} \frac{h^l_\mu(n)}{n}
    =
    \inf_{n\ge 1} \frac{1}{n}\int
    \eta\okra{\frac{dP_n|_{\mathcal{X}_l^n}}{d\mu^n|_{\mathcal{X}_l^n}}}d\mu^n,
  \end{align}
  where $\eta(x):=-x\log x$. We switch the order of infimums,
  \begin{align}
    \inf_{l\ge 0} h^l_\mu
    &=
      \inf_{n\ge 1} \frac{1}{n} \inf_{l\ge 0} \int
      \eta\okra{\frac{dP_n|_{\mathcal{X}_l^n}}{d\mu^n|_{\mathcal{X}_l^n}}}d\mu^n
  \end{align}
  and we apply Lemma \ref{theoConvexMartingale} to function
  $f(x)=\log 2-\eta(x)$.  Hence
  \begin{align}
    \inf_{l\ge 0} h^l_\mu
    &=
      \inf_{n\ge 1} \frac{1}{n} \int
      \eta\okra{\frac{dP_n}{d\mu^n}}d\mu^n
      =
      \inf_{n\ge 1} \frac{h_\mu(n)}{n}
      = h_\mu.
  \end{align}
  The proof is complete.
\end{proof}

\section{Applications}
\label{secApplications}

In this section, we show that if a universal measure exists then we
may construct a strongly consistent Ces\`aro mean estimator for the
limiting conditional density given an infinite past. Subsequently, we
show that any strongly consistent estimator of this conditional
density yields a universal predictor for a countable alphabet.

\subsection{Conditional density estimation}
\label{secDensity}

Here, we will show that a universal density, if it exists, induces a
strongly consistent Ces\`aro mean estimator of the conditional
density. For this goal, we consider a general reference measure $\mu$
as in Section \ref{secSetting}.

A stochastic process $(R_\mu^{(n)})_{n\ge 1}$ is called a conditional
density estimator if each random variable $R_\mu^{(n)}(x)$ is a
(marginal) probability density with respect to measure $\mu$ and each
function $R_\mu^{(n)}$ is a measurable function of random variables
$X_{-n:-1}$. We also denote the respective random measure
$R^{(n)}(A):=\int_A R_\mu^{(n)}(x) d\mu(x)$. We will seek for a
conditional density estimator that for any, $P\in\mathbb{E}(\mu)$,
converges in some strong sense to the conditional density
\begin{align}
  P_\mu^{(\infty)}(x):=\lim_{n\to\infty} P_\mu^{(n)}(x),
  \text{ where }
  P_\mu^{(n)}(x):= P_\mu(x|X_{-n:-1}).
\end{align}
We denote the respective random measure
$P^{(\infty)}(A):=\int_A P_\mu^{(\infty)}(x) d\mu(x)$. Obviously
$P^{(\infty)}=P_1$ if $P$ is a memoryless source.

Inspired by the construction of \cite{GyorfiPaliMeulen94}, we will
consider the following object:
\begin{definition}
  Consider a probability measure $R$ where $R_n\ll\mu^n$.  The Ces\`aro mean
  measure $\bar R$ is defined via conditional densities
  \begin{align}
    \bar R_\mu(x_n|x_{1:n-1}):=
    \frac{1}{n}\sum_{i=0}^{n-1} R_\mu(x_n|x_{n-i:n-1}).
  \end{align}
\end{definition}
\noindent
Let us observe that we may introduce a conditional density estimator
\begin{align}
  \bar R_\mu^{(n)}(x):=\bar R_\mu(x|X_{-n:-1}).
\end{align}
We will call it the Ces\`aro mean density estimator.  We denote the
respective random measure
$\bar R^{(n)}(A):=\int_A \bar R_\mu^{(n)}(x) d\mu(x)$.  In work
\cite{GyorfiPaliMeulen94}, a similar conditional density estimator was
considered, albeit with a reflected time arrow.  The Ces\`aro mean
measure $\bar R$ reminds also of linear interpolation models used for
statistical language modeling in the 1990's \cite{Jelinek97}.

For the Ces\`aro mean density estimator, we have the following result which
generalizes \cite[Theorem 1]{GyorfiPaliMeulen94} for memoryless
sources.
\begin{theorem}
  \label{theoInduced}
  Consider a measure $P\in\mathbb{E}(\mu)$, a probability measure $R$
  where $R_n\ll\mu^n$, and the Ces\`aro mean density estimator
  $(\bar R_\mu^{(n)})_{n\ge 1}$.  We have
  \begin{align}
    \mean\kwad{-\log R_\mu(X_{1:n})}/n-h_\mu
    \ge
    \mean D(P^{(\infty)}||\bar R^{(n)}).
  \end{align}
\end{theorem}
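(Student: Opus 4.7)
The plan is to expand $\mean D(P^{(\infty)}||\bar R^{(n)})$ into an entropy term and a cross-entropy term, reduce each to an ordinary expectation by the tower property, bound the cross-entropy by Jensen's inequality applied to the Ces\`aro average, and finally recognize $\mean\kwad{-\log R_\mu(X_{1:n})}$ via stationarity and the chain rule.

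First, writing out the KL divergence and bringing the outer expectation inside the integral, one obtains $\mean D(P^{(\infty)}||\bar R^{(n)}) = \mean\int P_\mu^{(\infty)}(x)\log\kwad{P_\mu^{(\infty)}(x)/\bar R_\mu^{(n)}(x)}\,d\mu(x)$. Because $P_\mu^{(\infty)}(\cdot)$ is, by Barron's construction recalled in Section \ref{secSetting}, a regular conditional density of $X_0$ given $X_{-\infty:-1}$, the tower property converts any integral against $P_\mu^{(\infty)}(x)\,d\mu(x)$ into an ordinary expectation. The entropy piece thus reduces to $-h_\mu$ using the identity $h_\mu = \mean\kwad{-\log P_\mu(X_0|X_{-\infty:-1})}$, while the cross-entropy piece reduces to $-\mean\log \bar R_\mu^{(n)}(X_0)$.

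Next, since $\bar R_\mu^{(n)}(X_0)$ is a uniform convex combination of conditional densities of the form $R_\mu(X_0|X_{-i:-1})$ for $i=0,1,\ldots,n-1$ (the $i=0$ term being the unconditional density $R_\mu(X_0)$), concavity of $\log$ gives $-\log\bar R_\mu^{(n)}(X_0)\le -\tfrac{1}{n}\sum_{i=0}^{n-1}\log R_\mu(X_0|X_{-i:-1})$. Taking expectations and invoking stationarity of $P$, each summand satisfies $\mean\kwad{-\log R_\mu(X_0|X_{-i:-1})} = \mean\kwad{-\log R_\mu(X_{i+1}|X_{1:i})}$, and the chain rule applied to $R_n$ telescopes the sum into $\mean\kwad{-\log R_\mu(X_{1:n})}$. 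Dividing by $n$ and combining with the entropy piece yields $\mean D(P^{(\infty)}||\bar R^{(n)}) \le -h_\mu + \mean\kwad{-\log R_\mu(X_{1:n})}/n$, which is the claim.

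The only delicate step is the use of the tower property, which relies on $P_\mu^{(\infty)}$ being a bona fide regular conditional density of $X_0$ given $\sigma(X_{-\infty:-1})$ and on $\log P_\mu^{(\infty)}(X_0)$ being integrable so that the iterated expectation is lawful; both facts are secured by $P\in\mathbb{E}(\mu)$ via the properties cited after (\ref{SMB}). Everything else is routine---Jensen's inequality, stationarity, and the chain rule for conditional densities---so the real content lies in recognizing that the Ces\`aro averaging built into $\bar R_\mu^{(n)}$ is precisely what is needed to match, term by term, the per-symbol redundancy of $R$ on the left-hand side.
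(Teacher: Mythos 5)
Your proof is correct and is essentially the paper's own argument read in reverse: the paper starts from $\mean\kwad{-\log R_\mu(X_{1:n})}/n$, applies the chain rule and stationarity, then Jensen's inequality on the Ces\`aro average, and finally uses Barron's identity $h_\mu=\mean\kwad{-\log P_\mu^{(\infty)}(X_0)}$ (with the tower property implicit) to recognize $\mean D(P^{(\infty)}||\bar R^{(n)})$. You use exactly the same ingredients, merely making the tower-property step explicit, so there is nothing substantively different to flag.
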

\begin{proof}
  We essentially apply the proof idea of \cite[Theorem
  1]{GyorfiPaliMeulen94}, which is a restriction of the present claim
  to $\mu$ being the counting measure and $P$ being a memoryless
  source. In the reasoning rewritten in a more transparent notation we
  apply the stationarity and the Jensen inequality,
  \begin{align}
    \mean\kwad{-\log R_\mu(X_{1:n})}/n-h_\mu
    &=\mean\kwad{-\frac{1}{n}\sum_{i=1}^n
      \log R_\mu(X_i|X_{1:i-1})}-h_\mu
      \nonumber\\
    &=
      \mean\kwad{-\frac{1}{n}\sum_{i=0}^{n-1}
      \log R_\mu(X_0|X_{-i:-1})}-h_\mu
      \nonumber\\
    &\ge
    \mean\kwad{-\log\frac{1}{n}\sum_{i=0}^{n-1}
      R_\mu(X_0|X_{-i:-1})}-h_\mu
      \nonumber\\
    &=\mean D(P^{(\infty)}||\bar R^{(n)}),
  \end{align}
  where the last transition is due to equality
  $h_\mu=\mean \kwad{-\log P_\mu^{(\infty)}(X_0)}$ shown by Barron
  \cite[Proof of Theorem 1]{Barron85}.
\end{proof}

We recall the total variation distance of probability measures $P_1$
and $R_1$ on a measurable space $(\mathbb{X},\mathcal{X})$, defined as
\begin{align}
  \delta(P_1,R_1):=\sup_{A\in\mathcal{X}}\abs{P_1(A)-R_1(A)}.
\end{align}
If $P_1,R_1\ll\mu$ then
$\delta(P_1,R_1)=\frac{1}{2}\int \abs{P_\mu(x)-R_\mu(x)}d\mu(x)$.  We
also recall the Pinsker inequality \cite{CsiszarKorner11}, which reads
\begin{align}
  \label{Pinsker}
  \delta(P_1,R_1)\le
  \sqrt{\frac{D(P_1||R_1)}{2\log e}}.
\end{align}

In consequence, if a universal density exists and the entropy rate is
finite then the Ces\`aro mean density estimator is strongly consistent in the
total variation.
\begin{theorem}
  \label{theoOrnstein}
  Suppose that measure $R$ is universal in expectation with respect to
  a reference measure $\mu$. Then for every measure
  $P\in\mathbb{E}(\mu)$, we have
  \begin{align}
    \lim_{n\to\infty} \delta(P^{(\infty)},\bar R^{(n)})=
    \lim_{n\to\infty} \delta(P^{(n)},\bar R^{(n)})=
    0
    \text{ a.s.}
  \end{align}
\end{theorem}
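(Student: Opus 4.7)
The idea is to derive both limits from Theorem \ref{theoInduced} combined with Pinsker's inequality (\ref{Pinsker}), using martingale convergence to reduce the second limit to the first.

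First, I would handle the second equality by reducing it to the first. The conditional densities $P_\mu^{(n)}$ form a Doob martingale in $n$ with respect to the increasing filtration $\sigma(X_{-n:-1})$ and converge pointwise $\mu$-a.e., $P$-a.s., to $P_\mu^{(\infty)}$---this is the fact recalled in Section \ref{secSetting} from \cite{Barron85}. Since the limit and approximants are all probability densities, Scheff\'e's lemma lifts this pointwise convergence to total variation, giving $\delta(P^{(n)}, P^{(\infty)}) \to 0$ $P$-a.s. The triangle inequality $\delta(P^{(n)}, \bar R^{(n)}) \le \delta(P^{(n)}, P^{(\infty)}) + \delta(P^{(\infty)}, \bar R^{(n)})$ then reduces the second equality to the first.

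For the first equality, I would invoke Theorem \ref{theoInduced}, which yields $\mean\kwad{-\log R_\mu(X_{1:n})}/n - h_\mu \ge \mean D(P^{(\infty)}||\bar R^{(n)})$. By the hypothesis of universality in expectation, the left-hand side vanishes, so $\mean D(P^{(\infty)}||\bar R^{(n)}) \to 0$. Pinsker's inequality gives the pointwise bound $\delta(P^{(\infty)}, \bar R^{(n)}) \le \sqrt{D(P^{(\infty)}||\bar R^{(n)})/(2\log e)}$, and Jensen's inequality then yields $\mean \delta(P^{(\infty)}, \bar R^{(n)}) \to 0$, i.e., convergence of $\delta(P^{(\infty)}, \bar R^{(n)})$ to $0$ in $L^1$ and hence in probability.

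The main obstacle is upgrading this $L^1$/in-probability limit to the claimed $P$-a.s.\ convergence. A natural route is to show separately that the Ces\`aro mean density $\bar R_\mu^{(n)}(x)$ converges $\mu$-a.e., $P$-a.s., to some probability density $R_\mu^{(\infty)}(x)$---via martingale convergence of $R_\mu(x|X_{-i:-1})$ as $i\to\infty$ combined with the fact that Ces\`aro means inherit pointwise limits---lift this to $\delta(\bar R^{(n)}, R^{(\infty)}) \to 0$ $P$-a.s.\ by another application of Scheff\'e, and then use the previously obtained $L^1$ limit along a subsequence to identify $R_\mu^{(\infty)} = P_\mu^{(\infty)}$ $\mu$-a.e., $P$-a.s. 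The delicate part of this step is justifying the pointwise martingale convergence of $R$'s conditional densities under $P$ (rather than under $R$), since $R$ need be neither stationary nor equivalent to $P$ on the tail.
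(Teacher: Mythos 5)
Your reduction of the second limit to the first is sound and is in fact a slightly different route from the paper's: you use the $P$-a.s.\ pointwise convergence $P_\mu^{(n)}\to P_\mu^{(\infty)}$ together with Scheff\'e's lemma and the triangle inequality, whereas the paper reuses its divergence argument, observing that $h_\mu=\lim_{n\to\infty}\mean\kwad{-\log P_\mu^{(n)}(X_0)}$ implies $\mean D(P^{(n)}||\bar R^{(n)})\to 0$ as well. Your first step is exactly the paper's opening move: Theorem \ref{theoInduced} plus universality in expectation gives $\mean D(P^{(\infty)}||\bar R^{(n)})\to 0$, and Pinsker's inequality (\ref{Pinsker}) turns this into convergence of $\delta(P^{(\infty)},\bar R^{(n)})$ in mean and in probability.

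The genuine gap is the one you name yourself: the theorem asserts almost sure convergence, and your argument rigorously delivers only the $L^1$/in-probability statement. Moreover, the repair you sketch cannot be completed as described: $R_\mu(x|X_{-i:-1})$ is a martingale under $R$, not under $P$, and since $R$ is neither stationary nor assumed to dominate $P$ on the tail $\sigma$-field, there is no guarantee of a $P$-a.s.\ pointwise limit of these conditional densities, hence nothing to feed into a Ces\`aro--Scheff\'e identification of $\bar R_\mu^{(n)}$ with a limiting density. The paper closes the step differently: from $\mean\kwad{\delta(P^{(\infty)},\bar R^{(n)})}^2\le \mean D(P^{(\infty)}||\bar R^{(n)})/(2\log e)\to 0$ it passes, citing dominated convergence, to $0=\lim_{n\to\infty}\mean\kwad{\delta(P^{(\infty)},\bar R^{(n)})}^2=\mean\kwad{\lim_{n\to\infty}\delta(P^{(\infty)},\bar R^{(n)})}^2$ and reads off the a.s.\ claim; the same scheme is then repeated for $\delta(P^{(n)},\bar R^{(n)})$. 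Note that this interchange is precisely the point you flag as delicate—it tacitly presupposes that the limit exists $P$-a.s.—so if you want a fully watertight a.s.\ argument you must supply a justification of that existence (bounded convergence alone gives only convergence in mean and a.s.\ convergence along a subsequence), rather than the $P$-martingale route you propose, which fails at the outset.
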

\begin{proof}
  If $R$ is universal in expectation then for every measure
  $P\in\mathbb{E}(\mu)$, we have
  \begin{align}
    0&=\lim_{n\to\infty} \mean\kwad{-\log R_\mu(X_{1:n})}/n-h_\mu
       \ge \lim_{n\to\infty} \mean D(P^{(\infty)}||\bar R^{(n)})\ge 0.
  \end{align}
  Now by the Pinsker inequality (\ref{Pinsker}) and by the dominated
  convergence, we obtain
  \begin{align}
    0
    &=
    \lim_{n\to\infty} \mean\kwad{\delta(P^{(\infty)},\bar R^{(n)})}^2
      =
      \mean \kwad{\lim_{n\to\infty}\delta(P^{(\infty)},\bar R^{(n)})}^2,
  \end{align}
  which implies the almost sure convergence for
  $\delta(P^{(\infty)},\bar R^{(n)})$.  Since
  \begin{align}
    h_\mu=\lim_{n\to\infty}\mean \kwad{-\log P_\mu^{(n)}(X_0)}
  \end{align}
  for a stationary $P$ then we also have
  \begin{align}
    \lim_{n\to\infty} \mean D(P^{(n)}||\bar R^{(n)})=
    \lim_{n\to\infty} \mean D(P^{(\infty)}||\bar R^{(n)})=0.
  \end{align}
  Thus, analogously we derive the almost sure convergence for
  $\delta(P^{(n)},\bar R^{(n)})$.
\end{proof}

\subsection{Universal prediction}
\label{secPrediction}

Theorem \ref{theoOrnstein} strengthens and generalizes the celebrated
Ornstein theorem, originally stated for binary stationary ergodic
processes \cite{Ornstein78}.  Ornstein's theorem plays an important
role in the theory of universal prediction of binary processes with
the $0-1$ loss \cite{MorvaiWeiss21}. Analogously, we can apply Theorem
\ref{theoOrnstein} to develop a theory of universal prediction for
processes over an arbitrary countable alphabet, strengthening the
recent result of \cite{DebowskiSteifer22} by the way.

The exact development, recalling the basic facts from
\cite{DebowskiSteifer22}, is as follows.  For a countable alphabet
$\mathbb{X}$, we will consider densities with respect to the counting
measure $\gamma$, denoted without subscript $\gamma$ according to our
earlier convention.  A predictor is an arbitrary function
$f:\mathbb{X}^*\rightarrow\mathbb{X}$.  The predictor $f_P$ induced by
a probability measure $P$ is defined as a maximizer of conditional
probability,
\begin{align}
  \label{MetaPredictor}
  f_{P}(x_{1:n-1})\in\argmax_{x_{n}\in\mathbb{X}} P(x_{n}|x_{1:n-1}).
\end{align}  
In the problem of universal prediction we seek for a
measure-independent predictor that minimizes the relative frequency of
prediction mistakes. That is, we apply the $0-1$ loss rather than the
logarithmic loss encountered in the problem of universal coding.

By the Azuma-Hoeffding inequality \cite{Azuma67}, for any probability
measure $P$, the rate of mistakes is equal to the rate of their
conditional probabilities, namely,
\begin{align}
  \label{AzumaPredictor}
  \lim_{n\to\infty}\frac{1}{n}\sum_{i=1}^{n}
  \kwad{\boole{X_{i}\neq f(X_{1:i-1})}-P(X_{i}\neq f(X_{1:i-1})|X_{1:i-1})}=0
  \text{ a.s.},
\end{align}
see \cite[Theorem 3.5]{DebowskiSteifer22} for the derivation.
Therefore, as shown in \cite{Algoet94}, \cite[Theorem
3.5]{DebowskiSteifer22} using the Breiman ergodic theorem
\cite{Breiman57}, \cite[Theorem 12(c)]{Algoet94}, for any measure
$P\in\mathbb{E}$ and any predictor $f$, we have
\begin{align}
  \label{Predictor}
  \liminf_{n\to\infty}\frac{1}{n}\sum_{i=1}^{n}\boole{X_{i}\neq f(X_{1:i-1})}
  &\ge
    u
    \text{ a.s.},
\end{align}
where we define the unpredictability rate
\begin{align}
  u:=\mean\kwad{1-\max_{x\in\mathbb{X}} P(x|X_{-\infty:-1})}
  .
\end{align}
By contrast, for any measure $P\in\mathbb{E}$ and its induced
predictor $f_P$, we have
\begin{align}
  \lim_{n\to\infty}\frac{1}{n}\sum_{i=1}^{n}\boole{X_{i}\neq f_P(X_{1:i-1})}
  &=
    u
    \text{ a.s.}
\end{align}

Thus by an analogy to universal measures, we propose universal
predictors.
\begin{definition}
  \label{defiPUniAS}
  A predictor $f$ is called strongly universal with respect to a
  reference measure $\mu$ if for any measure $P\in\mathbb{E}(\mu)$,
  \begin{align}
    \label{PUniAS}
    \lim_{n\to\infty}\frac{1}{n}\sum_{i=1}^{n}\boole{X_{i}\neq f(X_{1:i-1})}
    &=
      u
    \text{ a.s.}
  \end{align}
\end{definition}
\begin{definition}
  \label{defiPUniExp}
  A predictor $f$ is called strongly universal with respect to a
  reference measure $\mu$ if for any measure $P\in\mathbb{E}(\mu)$,
  \begin{align}
    \label{PUniExp}
    \lim_{n\to\infty}\mean
    \frac{1}{n}\sum_{i=1}^{n}\boole{X_{i}\neq f(X_{1:i-1})}
    &=
      u.
  \end{align}
\end{definition}
\begin{definition}
  \label{defiPUniIP}
  A predictor $f$ is called strongly universal with respect to a
  reference measure $\mu$ if for any measure $P\in\mathbb{E}(\mu)$,
  \begin{align}
    \label{PUniIP}
    \lim_{n\to\infty}\frac{1}{n}\sum_{i=1}^{n}\boole{X_{i}\neq f(X_{1:i-1})}
    &=
      u
    \text{ i.p.}
  \end{align}
\end{definition}
\noindent
\emph{Remark:} By analogy, a predictor that satisfies (\ref{PUniAS}),
(\ref{PUniExp}), and (\ref{PUniIP}) is simply called universal with
respect to $\mu$. But we have implications (\ref{PUniAS}) $\implies$
(\ref{PUniExp}) $\implies$ (\ref{PUniIP}) always. Implication
(\ref{PUniAS}) $\implies$ (\ref{PUniExp}) follows by the dominated
convergence.  Implication (\ref{PUniExp}) $\implies$ (\ref{PUniIP})
follows by the Markov inequality and the almost sure lower bound
(\ref{Predictor}).  The exact proofs resemble proofs of Propositions
\ref{theoUniExp} and \ref{theoUniIP}. Thus each strongly universal
predictor is universal.

In work \cite[Theorems 3.12 and 3.18]{DebowskiSteifer22}, it was shown
that for a finite alphabet $\mathbb{X}$ and a strongly universal
measure $R$, the induced predictor $f_R$ is universal if we have a
uniform bound for conditional probabilities of form
\begin{align}
    \label{CondDomination}
    -\log R(x_{n+1}|x_1^n)&\le\epsilon_n\sqrt{n/\log n},
                          \quad \lim_{n\to\infty} \epsilon_n=0.
\end{align}
Being uniform in symbols, condition (\ref{CondDomination}) can be
satisfied only if the alphabet is finite.  Subsequently, we will show
that condition (\ref{CondDomination}) can be dropped if measure $R$ is
universal in expectation and if we consider predictor $f_{\bar R}$,
induced by the Ces\`aro mean measure $\bar R$, rather than predictor
$f_R$, induced by the original measure $R$. Hence we have a universal
predictor also for a countably infinite alphabet.
\begin{theorem}
  Consider a countable alphabet $\mathbb{X}$.  Suppose that measure
  $R$ is universal in expectation with respect to a reference measure
  $\mu$. The Ces\`aro mean predictor $f_{\bar R}$ is universal with
  respect to measure $\mu$.
\end{theorem}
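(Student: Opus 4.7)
The plan is to reduce the empirical mistake rate of $f_{\bar R}$ to a Ces\`aro mean of conditional mistake probabilities via the Azuma--Hoeffding identity (\ref{AzumaPredictor}), and then to split the per-step conditional error into an ``oracle'' part plus a ``regret'' part controlled by Theorem \ref{theoOrnstein}. Since strong universality implies universality by the remark following Definition \ref{defiPUniIP}, it suffices to establish (\ref{PUniAS}).

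Writing $m_i := \max_{x\in\mathbb{X}} P(x\mid X_{1:i-1})$ for the pointwise Bayes success probability at step $i$, I would decompose
\begin{align*}
  P(X_i\neq f_{\bar R}(X_{1:i-1})\mid X_{1:i-1})
  = (1-m_i) + \bigl(m_i - P(f_{\bar R}(X_{1:i-1})\mid X_{1:i-1})\bigr).
\end{align*}
For the oracle term, Doob's martingale theorem together with Scheff\'e's lemma (the alphabet being countable) yields $\max_x P(x\mid X_{-k:-1})\to \max_x P(x\mid X_{-\infty:-1})=:g_\infty$ almost surely as $k\to\infty$, with $\mean g_\infty=1-u$. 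Setting $g_k(\omega):=\max_x P(x\mid X_{-k:-1}(\omega))$, we have $m_i=g_{i-1}(T^i\omega)$ under the shift $T$, with $|g_k|\le 1$ and $g_k\to g_\infty$ a.s., so Breiman's ergodic theorem \cite[Theorem 12(c)]{Algoet94} gives $\frac{1}{n}\sum_{i=1}^n (1-m_i)\to u$ a.s.

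For the regret term, the defining property of $f_{\bar R}$ gives $\bar R(f_{\bar R}(X_{1:i-1})\mid X_{1:i-1})\ge \bar R(f_P(X_{1:i-1})\mid X_{1:i-1})$, so a two-sided triangle estimate produces
\begin{align*}
  0\le m_i - P(f_{\bar R}(X_{1:i-1})\mid X_{1:i-1}) \le 2\,\delta_i,\qquad
  \delta_i := \delta\bigl(P(\cdot\mid X_{1:i-1}),\,\bar R(\cdot\mid X_{1:i-1})\bigr).
\end{align*}
By Theorem \ref{theoOrnstein}, the function $h_k(\omega):=\delta(P(\cdot\mid X_{-k:-1}),\bar R(\cdot\mid X_{-k:-1}))$ tends to $0$ almost surely, and since $\delta_i=h_{i-1}(T^i\omega)$ with $|h_k|\le 1$, a second application of Breiman's theorem yields $\frac{1}{n}\sum_{i=1}^n\delta_i\to 0$ a.s. Adding the two estimates, $\frac{1}{n}\sum_{i=1}^n P(X_i\neq f_{\bar R}(X_{1:i-1})\mid X_{1:i-1})\to u$ a.s.; combined with (\ref{AzumaPredictor}), this is precisely (\ref{PUniAS}).

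The main subtlety I anticipate is the transfer from the ``backward'' almost sure statements of Theorem \ref{theoOrnstein} and Doob's theorem---in which the conditioning past grows backwards from time $0$---to the ``forward'' Ces\`aro averages over $i=1,\ldots,n$ with pasts of length $i-1$. Breiman's ergodic theorem is tailor-made for this once one identifies $m_i$ and $\delta_i$ as shift-composed evaluations of the a.s.\ convergent families $g_k$ and $h_k$; uniform boundedness of both sequences removes any integrability issue, and stationarity of $P$ allows free shifting of indices.
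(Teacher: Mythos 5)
Your proof is correct and follows essentially the same route as the paper's: the Azuma--Hoeffding reduction (\ref{AzumaPredictor}), a prediction inequality bounding the regret of $f_{\bar R}$ by twice the conditional total variation, Theorem \ref{theoOrnstein}, and Breiman's ergodic theorem to carry the backward almost-sure limits over to forward Ces\`aro averages. The only minor difference is that you re-derive the convergence of the averaged Bayes error to $u$ via Doob's martingale convergence plus Scheff\'e's lemma, whereas the paper simply compares $f_{\bar R}$ with the induced predictor $f_P$ and invokes the previously stated fact that $f_P$ attains the unpredictability rate $u$.
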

\begin{proof}
  Consider a measure $P\in\mathbb{E}(\mu)$.  By Theorem
  \ref{theoOrnstein}, we have a generalization of the Ornstein theorem
  \cite{Ornstein78}, namely,
  \begin{align}
    \label{OrnsteinCountable}
    \lim_{n\to\infty} 2\delta(P^{(n)},\bar R^{(n)})=
    \lim_{n\to\infty} \sum_{x\in\mathbb{X}}
    \abs{P(x|X_{-n:-1})-\bar R(x|X_{-n:-1})}=0
    \text{ a.s.}
  \end{align}
  since $P^{(n)},\bar R^{(n)}\ll\gamma$.  As it follows from a simple
  application of the Breiman ergodic theorem \cite{Breiman57},
  \cite[Theorem 12(c)]{Algoet94}, statement (\ref{OrnsteinCountable})
  implies a generalization of the Bailey theorem \cite{Bailey76},
  namely,
  \begin{align}
    \lim_{n\to\infty} \frac{1}{n}\sum_{i=1}^n
    \sum_{x\in\mathbb{X}}
    \abs{P(x|X_{1:i-1})-\bar R(x|X_{1:i-1})}=0
    \text{ a.s.}
  \end{align}
  Since we have a so called prediction inequality
  \begin{align}
    &P(X_{i}=f_P(X_{1:i-1})|X_{1:i-1})-
      P(X_{i}=f_{\bar R}(X_{1:i-1})|X_{1:i-1})
      \nonumber\\
    &\qquad\le
    \sum_{x\in\mathbb{X}}
    \abs{P(x|X_{1:i-1})-\bar R(x|X_{1:i-1})}
  \end{align}
  noticed in work \cite[Proposition 3.11]{DebowskiSteifer22} then by
  corollary (\ref{AzumaPredictor}) of the Azuma-Hoeffding inequality,
  we obtain
  \begin{align}
    \lim_{n\to\infty}\frac{1}{n}\sum_{i=1}^{n}
    \kwad{\boole{X_{i}\neq f_{\bar R}(X_{1:i-1})}-\boole{X_{i}\neq f_P(X_{1:i-1})}}=0
    \text{ a.s.}
  \end{align}
  That is, the Ces\`aro mean predictor $f_{\bar R}$ is universal.
\end{proof}

The Ces\`aro mean predictor $f_{\bar R}$ for $R=\NPD$ that applies the
PPM measures is quite complex. It contains five nested maximizations,
summations, and products that may contribute to a pessimistic time
complexity $O(n^5)$, where $n$ is the length of the sample. In the
future research, it would be advisable to seek for a universal
predictor with a smaller time complexity.

\section{Examples}
\label{secExamples}

As some supplementary examples, in this section, we scale down the NPD
measure to two cases where consistent estimation of the entropy rate
is not possible in general. These are the countably infinite alphabet
and the real line. The general infeasibility of consistent estimation
in these cases becomes intuitive by virtue of additional assumptions
that we are bound to make. Once these conditions are met, the
corrected NPD estimator is strongly consistent.

\subsection{Countably infinite alphabet}
\label{secInfinite}

Subsequently, let us consider a countably infinite alphabet
$\mathbb{X}$. We recall that we have earlier adopted notation
$P_1:\mathbb{X}\ni x\mapsto P(x)$ for the marginal density of measure
$P$ on $(\mathbb{X}^{\mathbb{Z}},\mathcal{X}^{\mathbb{Z}})$ with
respect to the counting measure $\gamma(A)=\card A$ on
$(\mathbb{X},\mathcal{X})$. Thus for probability measures $P$ and $R$,
we denote the Shannon entropy and the Kullback-Leibler divergence
taken with respect to their marginal densities:
\begin{align}
  H(P_1)&:=
  -\sum_{x\in\mathbb{X}} P(x)\log P(x),
  \\
  D(P_1||R_1)&:=
  \sum_{x\in\mathbb{X}} P(x)\log\frac{P(x)}{R(x)}.
\end{align}
It may appear a bit surprising that we have $P_1$ on the left hand
side and $P$ on the right hand side but we prefer not to multiply the
notational conventions, which are sufficiently overloaded to deal with
quantization levels.

We recall that there is no consistent conditional density estimator
for a countably infinite alphabet, in general.
\begin{theorem}[\mbox{\cite[Theorem 2]{GyorfiPaliMeulen94}}]
  \label{theoCounterexample}
  Let the alphabet $\mathbb{X}$ be countably infinite.  Let
  $(R^{(n)})_{n\ge 1}$ be an arbitrary conditional density
  estimator. Then there is a memoryless source $P$ such that
  $H(P_1)<\infty$ and $D(P_1||R^{(n)})=\infty$ a.s.\ for all $n$.
\end{theorem}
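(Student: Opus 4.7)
The plan is to exploit the fact that, because the alphabet $\mathbb{X}$ is countable, for each fixed $n$ the data tuple $X_{-n:-1}$ takes at most countably many values, so the estimator $R^{(n)}$ can assume at most countably many distinct values (each itself a probability measure on $\mathbb{X}$). Taking the union over all $n$, I would collect these into a countable family of probability distributions $\{R_k\}_{k\ge 1}\subset \mathcal{R}$, indexed so that each possible value $R^{(n)}(x_{-n:-1})$ appears as some $R_k$. Then a single memoryless source $P$ chosen carefully will make $D(P_1\Vert R_k)=\infty$ for \emph{every} $k$ simultaneously, which yields $D(P_1\Vert R^{(n)})=\infty$ surely, hence a.s., for every $n$.

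The next step is to consolidate the family $\{R_k\}$ into one dominating distribution by the convex combination
\begin{align}
  \bar R(x):=\sum_{k=1}^\infty 2^{-k}R_k(x),
\end{align}
which is a probability measure on $\mathbb{X}$. The key observation is that $R_k(x)\le 2^k\bar R(x)$ pointwise, so whenever $P_1\ll\bar R$,
\begin{align}
  D(P_1\Vert R_k)\ge D(P_1\Vert \bar R)-k\log 2.
\end{align}
Hence it suffices to build a single memoryless source $P$ whose marginal $P_1$ satisfies $H(P_1)<\infty$ and $D(P_1\Vert\bar R)=\infty$; the dominating inequality then propagates divergence to each $R_k$.

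For that construction, I would use the fact that $\bar R$ is summable on the countable set $\mathbb{X}$, so I can pick a sequence of \emph{distinct} symbols $x_1,x_2,\ldots\in\mathbb{X}$ with $\bar R(x_k)\le 2^{-k^2}$ (such $x_k$ exist since $\bar R(x)\to 0$ along any enumeration, modulo relabeling). Define
\begin{align}
  P_1(x_k):=\frac{c}{k^2},\quad c:=\frac{6}{\pi^2},\qquad P_1(x):=0\text{ for }x\notin\{x_k\}.
\end{align}
Standard estimates give $H(P_1)=\sum_k (c/k^2)\log(k^2/c)<\infty$, while
\begin{align}
  D(P_1\Vert\bar R)=-H(P_1)+\sum_{k=1}^\infty P_1(x_k)\kwad{-\log \bar R(x_k)}
  \ge -H(P_1)+\sum_{k=1}^\infty \frac{c}{k^2}\cdot k^2\log 2=\infty.
\end{align}
Taking $P$ to be the memoryless source with marginal $P_1$ finishes the argument: for every realization $x_{-n:-1}$, the estimator value $R^{(n)}(x_{-n:-1})$ equals some $R_k$, and by the dominating bound $D(P_1\Vert R^{(n)}(x_{-n:-1}))=\infty$.

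The one subtle point — the ``hard part'' in this otherwise clean argument — is the initial countability observation: one must check that the class of all possible values of the estimator $R^{(n)}$, across all $n$ and all data realizations, is countable, so that the convex combination $\bar R$ is well defined. This follows because a conditional density estimator is by definition a measurable function of the data, and the data space $\bigcup_n \mathbb{X}^n$ is countable when $\mathbb{X}$ is; hence the entire diagonalization reduces to the single-distribution lemma ``for any probability distribution $\bar R$ on a countable set there exists $P_1$ with $H(P_1)<\infty$ and $D(P_1\Vert\bar R)=\infty$'', which is the classical reason why marginal density estimation fails on countably infinite alphabets.
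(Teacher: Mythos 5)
Your argument is correct, and in fact proves something slightly stronger than stated: divergence for every data realization, not just almost surely. Note that the paper itself does not prove this theorem --- it is quoted from \cite[Theorem 2]{GyorfiPaliMeulen94} --- so there is no in-paper proof to compare against; your mixture-and-domination diagonalization is essentially the classical route to this result. The chain of steps is sound: for each $n$ the estimator is by definition a function of $X_{-n:-1}\in\mathbb{X}^n$, so its range is countable and the union over $n$ gives a countable family $\{R_k\}$; the mixture $\bar R=\sum_k 2^{-k}R_k$ dominates each $R_k$ pointwise, so $D(P_1||R_k)\ge D(P_1||\bar R)-k\log 2$; and the sparse-support construction (possible because at most $2^{k^2}$ symbols can carry $\bar R$-mass above $2^{-k^2}$, while $\mathbb{X}$ is infinite) yields $H(P_1)<\infty$ together with infinite cross entropy against $\bar R$, hence $D(P_1||\bar R)=\infty$. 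Two cosmetic points you should tighten: the clause ``whenever $P_1\ll\bar R$'' is unnecessary and slightly misleading --- if $\bar R(x_k)=0$ for some chosen $x_k$ then $R_k(x_k)=0$ as well and the divergences are infinite outright, and in general the inequality is best justified summand by summand (each term $P_1(x)\log\bigl(P_1(x)/R_k(x)\bigr)$ dominates $P_1(x)\log\bigl(P_1(x)/(2^k\bar R(x))\bigr)$, and the lower-bound series is well defined with value $+\infty$ because $H(P_1)<\infty$); and you should state explicitly that the enumeration $\{R_k\}$ is taken over all $n$ and all data values jointly, so that the same fixed $P$ defeats every $R^{(n)}$ simultaneously, which is what the theorem requires.
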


In view of Theorems \ref{theoInduced} and \ref{theoCounterexample},
for a countably infinite alphabet $\mathbb{X}$, for each measure $R$
there exists a memoryless source $P$ with $H(P_1)<\infty$ such that
\begin{align}
  \lim_{n\to\infty} \mean\kwad{-\log R(X_{1:n})}/n&=\infty.
\end{align}
Since each memoryless source $P$ is stationary ergodic with entropy
rate $H(P_1)$ we obtain the known result.
\begin{theorem}[\mbox{\cite{Kieffer78}, \cite[Theorem
    3]{GyorfiPaliMeulen94}}]
  There is no measure universal in expectation with respect to a
  countably infinite alphabet.
\end{theorem}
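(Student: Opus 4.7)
The plan is to combine Theorem \ref{theoInduced} with the counterexample in Theorem \ref{theoCounterexample} to rule out universality in expectation by contradiction. Let $R$ be any probability measure on $(\mathbb{X}^{\mathbb{Z}},\mathcal{X}^{\mathbb{Z}})$ with $R_n\ll \gamma^n$, where $\gamma$ is the counting measure on the countably infinite alphabet $\mathbb{X}$. The key observation is that the Ces\`aro mean construction of Section \ref{secDensity} produces, from $R$, a legitimate conditional density estimator $(\bar R^{(n)})_{n\ge 1}$: indeed each $\bar R^{(n)}(x)$ is a probability density with respect to $\gamma$ and a measurable function of $X_{-n:-1}$. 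Thus $\bar R^{(n)}$ is exactly the sort of object to which Theorem \ref{theoCounterexample} applies.

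First I would invoke Theorem \ref{theoCounterexample} to produce a memoryless source $P$ with $H(P_1)<\infty$ such that $D(P_1 \| \bar R^{(n)})=\infty$ almost surely for every $n$. Since $P$ is memoryless, its infinite-past conditional density reduces to the marginal, i.e.\ $P^{(\infty)}=P_1$, so in particular $\mean D(P^{(\infty)}\|\bar R^{(n)})=\infty$ for every $n$.

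Next I would feed this into Theorem \ref{theoInduced}, which yields
\begin{align}
  \mean\kwad{-\log R(X_{1:n})}/n - h \ge \mean D(P^{(\infty)}\|\bar R^{(n)}) = \infty,
\end{align}
where $h=H(P_1)$ is the entropy rate of the memoryless source $P\in\mathbb{E}(\gamma)$. Consequently $\mean\kwad{-\log R(X_{1:n})}/n \to \infty \neq h$, so $R$ fails the definition of universality in expectation on this $P$. Since $R$ was arbitrary, no measure can be universal in expectation with respect to $\gamma$ on the countably infinite alphabet $\mathbb{X}$.

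There is no real obstacle here: the argument is essentially the hint given in the paragraph preceding the statement, and the only nontrivial input — the existence of a bad memoryless source for any prescribed conditional density estimator — is already encapsulated in Theorem \ref{theoCounterexample}. The one point that deserves care is verifying that the Ces\`aro mean estimator $\bar R^{(n)}$ is a valid conditional density estimator in the sense used by \cite{GyorfiPaliMeulen94}, but this is immediate from its definition as an arithmetic mean of the conditional densities $R(\cdot|X_{-i:-1})$, each of which is measurable in $X_{-n:-1}$ and integrates to one against $\gamma$.
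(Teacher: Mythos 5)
Your proposal is correct and follows exactly the paper's own argument: the paper derives this theorem by combining Theorem \ref{theoInduced} with the Gy\"orfi--P\'ali--van der Meulen counterexample (Theorem \ref{theoCounterexample}) applied to the Ces\`aro mean estimator built from an arbitrary candidate $R$, yielding a memoryless source with finite entropy on which $\mean\kwad{-\log R(X_{1:n})}/n$ is infinite. You merely spell out the details the paper leaves implicit, so there is nothing to add.
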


For a countably infinite alphabet $\mathbb{X}$, consider now a
reference measure $\mu$ to be contrasted with the counting measure
$\gamma$. We suppose that $\mu\ll \gamma$ and $\mu\gg \gamma$.
Entropies of $P$ with respect to $\mu$ are written as $h_\mu(n)$ or
$h_\mu$, whereas entropies of $P$ with respect to $\gamma$ are written
as $h(n)$ or $h$. We have
\begin{align}
  \label{PPmu}
  P(x_{1:n})=
  P_\mu(x_{1:n})\frac{d\mu^n}{d\gamma^n}(x_{1:n})=
  P_\mu(x_{1:n})\prod_{i=1}^n\mu(x_i).
\end{align}

Let us write the marginal cross entropy
\begin{align}
 H(P_1||R_1):=H(P_1)+D(P_1||R_1)=-\sum_{x\in\mathbb{X}} P(x)\log R(x).
\end{align}
If $H(P_1||\mu)<\infty$ then from equation (\ref{PPmu}), applying
stationarity, we obtain
\begin{align}
  h(n)
  &=h_\mu(n)
    -n\sum_{x\in\mathbb{X}} P(x)\log \mu(x)
    =h_\mu(n)+n H(P_1||\mu)
    .
\end{align}
Hence $h=h_\mu+H(P_1||\mu)\ge 0$. If $\mu$ is a probability measure
then $h_\mu\le 0$. Consequently, we have a sufficient condition
\begin{align}
  P\in\mathbb{E},
  \;
  \mu(x)>0 \text{ for all } x\in\mathbb{X},
  \; 
  H(P_1||\mu)<\infty
  \implies
  P\in\mathbb{E}(\mu).
\end{align}

In particular, we may estimate the entropy rate in the following way,
compare it with more complicated characterizations in
\cite{SilvaPiantanida20}:
\begin{theorem}
  \label{theoInfiniteNPD}
  Consider a countably infinite alphabet $\mathbb{X}$ and a
  probability measure $\mu$ such that $\mu(x)>0$ for all
  $x\in\mathbb{X}$. Let $P\in\mathbb{E}$ with
  $H(P_1||\mu)<\infty$. Then
  \begin{align}
    \label{InfiniteNPD}
    \lim_{n\to\infty}\frac{1}{n}\kwad{-\log\NPD_\mu(X_{1:n})-\sum_{i=1}^n
    \log\mu(X_i)}=h
    \text{ a.s.}
  \end{align}
\end{theorem}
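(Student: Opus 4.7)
The plan is to decompose the left-hand side of (\ref{InfiniteNPD}) as
\begin{align*}
\frac{1}{n}\kwad{-\log\NPD_\mu(X_{1:n})} + \frac{1}{n}\sum_{i=1}^n\kwad{-\log\mu(X_i)}
\end{align*}
and argue that the two summands converge almost surely to $h_\mu$ and $H(P_1\|\mu)$ respectively. The conclusion will then follow from the identity $h = h_\mu + H(P_1\|\mu)$ derived in the paragraph preceding the theorem.

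For the first term, I would invoke Theorem \ref{theoFiniteAS}. Its hypotheses are met: since $\mathbb{X}$ is countable, the $\sigma$-field $\mathcal{X}=2^{\mathbb{X}}$ is countably generated, and $\mu$ is finite because it is a probability measure. To apply universality I also need $P\in\mathbb{E}(\mu)$, which is guaranteed by the sufficient condition established immediately above the statement: $P\in\mathbb{E}$ together with $\mu(x)>0$ for all $x$ and $H(P_1\|\mu)<\infty$ implies $P\in\mathbb{E}(\mu)$. Theorem \ref{theoFiniteAS} then yields $\lim_{n\to\infty}\kwad{-\log\NPD_\mu(X_{1:n})}/n = h_\mu$ a.s.

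For the second term, the random variables $Y_i:=-\log\mu(X_i)$ are nonnegative (since $\mu(x)\le 1$) and form a stationary ergodic sequence under $P$. Their common mean is $\mean Y_1 = -\sum_{x\in\mathbb{X}} P(x)\log\mu(x) = H(P_1\|\mu)$, which is finite by assumption. The Birkhoff ergodic theorem therefore gives
\begin{align*}
\lim_{n\to\infty}\frac{1}{n}\sum_{i=1}^n\kwad{-\log\mu(X_i)} = H(P_1\|\mu) \text{ a.s.}
\end{align*}

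Adding the two limits and using $h = h_\mu + H(P_1\|\mu)$ concludes the proof. There is no real obstacle here; the only point that requires attention is that one cites the implication $P\in\mathbb{E},\ \mu(x)>0,\ H(P_1\|\mu)<\infty \Rightarrow P\in\mathbb{E}(\mu)$ in order to legitimately apply Theorem \ref{theoFiniteAS}, and that the finiteness of $H(P_1\|\mu)$ is exactly the integrability required to invoke the Birkhoff theorem for $-\log\mu(X_1)$.
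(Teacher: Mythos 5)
Your proposal is correct and follows essentially the same route as the paper: the paper's proof likewise splits the expression into $-n^{-1}\log\NPD_\mu(X_{1:n})\to h_\mu$ (by universality of the NPD density with respect to the finite measure $\mu$, using the sufficient condition $P\in\mathbb{E}(\mu)$ derived just before the theorem) and $-n^{-1}\sum_{i=1}^n\log\mu(X_i)\to H(P_1||\mu)$ (by the Birkhoff ergodic theorem), then adds the limits via $h=h_\mu+H(P_1||\mu)$. Your write-up merely makes the verification of the hypotheses of Theorem \ref{theoFiniteAS} more explicit.
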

\noindent
\emph{Remark:} In particular, estimator (\ref{InfiniteNPD}) is
strongly consistent for any source over a finite but unknown alphabet.
\begin{proof}
  Quantity $-n^{-1}\log\NPD_\mu(X_{1:n})$ is a strongly consistent
  estimator of $h_\mu$, whereas $-n^{-1}\sum_{i=1}^n \log\mu(X_i)$ is
  a strongly consistent estimator of $H(P_1||\mu)$ by the Birkhoff
  ergodic theorem. Since $H(P_1||\mu)<\infty$ then the sum of these two
  estimators converges to $h$.
\end{proof}

Let us consider a particular quantization of the set of natural
numbers.
\begin{example}[incremental partitions]
  \label{exNaturalNumbers}
  Let the alphabet be the set of natural numbers,
  $\mathbb{X}=\mathbb{N}$. Suppose that $\mu(m)>0$ for all
  $m\in\mathbb{N}$.  Let $P\in\mathbb{E}$ with cross entropy
  $H(P_1||\mu)<\infty$.  Let the universal measures in (\ref{NPDml})
  be $R^l=\PPM^{\card \chi_l}$.  In this case, estimator
  (\ref{InfiniteNPD}) differs from the PPM measure since
  $\mu(X_i)\le \mu(X_i^l)$. To observe simplifications of estimator
  (\ref{InfiniteNPD}), let us take partitions
\begin{align}
  \chi_l:=\klam{\klam{1},\klam{2},...,\klam{l},\klam{l+1,l+2,...}}
  ,
  \quad
  l\ge 0
  .
\end{align}
Let us denote the optimal quantization level $Q_n$ and the optimal
Markov order $R_n$ for sample $X_{1:n}$ as some elements
\begin{align}
  (Q_n,R_n)\in\argmax_{q,r\ge 0}
  \frac{\PPM^{q+1}_r(X_{1:n}^{q})}{\mu^n(X_{1:n}^{q})}
  .
\end{align}
Since we have $\PPM^{q+1}_r(x_{1:n})<\PPM^{q}_r(x_{1:n})$ in general
and $\PPM^{q}_r(x_{1:n})=q^{-n}$ for $r\ge n-1$, we may fix pair
$(Q_n,R_n)$ so that
\begin{align}
  Q_n&\le \max X_{1:n},
  &
    R_n&\le n-1.
\end{align}
Using the well known idea for the PPM measure, we may also bound
  \begin{align}
    \label{NPDBound}
    0\le
    -\log\NPD_\mu(X_{1:n})
    +\log\frac{\PPM^{Q_n+1}_{R_n}(X_{1:n}^{Q_n})}{\mu^n(X_{1:n}^{Q_n})}
    \le
    -\log w_{Q_n}-\log w_{R_n}.
  \end{align}
\end{example}

Thus we may specialize Theorem \ref{theoInfiniteNPD} as the following
proposition.
\begin{proposition}
  Consider the setting of Example \ref{exNaturalNumbers}. Suppose that
  \begin{align}
  \lim_{n\to\infty}\frac{\log \max X_{1:n}}{n}=0 \text{ a.s.}
  \end{align}
  Then we have 
  \begin{align}
    \label{InfinitePPM}
    \lim_{n\to\infty}\frac{1}{n}
    \kwad{-\log\PPM^{Q_n+1}_{R_n}(X_{1:n}^{Q_n})+C_n(Q_n)}=h
    \text{ a.s.},
  \end{align}
  where we define
  \begin{align}
    C_n(q)
    &:=
      -\sum_{i=1}^n \log\frac{\mu(X_i)}{\mu(X_i^q)}
      =
      -\sum_{i=1}^n \boole{X_i> q}\log\mu(X_i|X_i> q)\ge 0.
  \end{align}
\end{proposition}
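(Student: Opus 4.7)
The plan is to reduce the proposition to Theorem \ref{theoInfiniteNPD} by showing that the empirical quantity in (\ref{InfinitePPM}) differs from $-\log\NPD_\mu(X_{1:n})-\sum_{i=1}^n\log\mu(X_i)$ by a term that is $o(n)$ almost surely. The sandwich bound (\ref{NPDBound}) supplies a deterministic error estimate for the single-term approximation of $\NPD_\mu$, and the growth hypothesis $\log\max X_{1:n}/n\to 0$ together with the a priori bounds $Q_n\le\max X_{1:n}$ and $R_n\le n-1$ from Example \ref{exNaturalNumbers} turn this into a vanishing normalized penalty.

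First I would rewrite the target quantity by the algebraic identity
\[
  -\log\PPM^{Q_n+1}_{R_n}(X_{1:n}^{Q_n})+C_n(Q_n)
  =
  -\log\frac{\PPM^{Q_n+1}_{R_n}(X_{1:n}^{Q_n})}{\mu^n(X_{1:n}^{Q_n})}
  -\sum_{i=1}^n\log\mu(X_i),
\]
which follows from $\mu^n(X_{1:n}^{Q_n})=\prod_{i=1}^n\mu(X_i^{Q_n})$ and the definition of $C_n(Q_n)$. This aligns the target expression with the NPD-based estimator already handled by Theorem \ref{theoInfiniteNPD}, whose second term $-\frac{1}{n}\sum_i\log\mu(X_i)$ matches exactly.

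Next I would apply (\ref{NPDBound}), which says precisely that there is a random $E_n$ with
\[
  -\log\frac{\PPM^{Q_n+1}_{R_n}(X_{1:n}^{Q_n})}{\mu^n(X_{1:n}^{Q_n})}
  =
  -\log\NPD_\mu(X_{1:n})+E_n,
  \qquad
  0\le -E_n\le -\log w_{Q_n}-\log w_{R_n}.
\]
Since $w_k=1/[(k+1)(k+2)]$, one has $-\log w_k\le 2\log(k+2)+O(1)$, and using $R_n\le n-1$ together with $Q_n\le\max X_{1:n}$ from Example \ref{exNaturalNumbers} I get
\[
  \frac{\abs{E_n}}{n}\le \frac{2\log(\max X_{1:n}+2)+2\log(n+1)+O(1)}{n},
\]
which tends to $0$ almost surely by the standing hypothesis $\log\max X_{1:n}/n\to 0$ together with $(\log n)/n\to 0$.

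Combining the two displays, the left-hand side of (\ref{InfinitePPM}) equals $\frac{1}{n}[-\log\NPD_\mu(X_{1:n})-\sum_{i=1}^n\log\mu(X_i)]+E_n/n$, so Theorem \ref{theoInfiniteNPD} and $E_n/n\to 0$ a.s.\ yield the claimed limit $h$. The only non-routine ingredient is the vanishing of the normalized penalty: this is exactly what the growth assumption on $\max X_{1:n}$ is designed for, and the uniform bound $R_n\le n-1$ from the Example is what prevents the Markov-order penalty from swamping the estimate. Once these are in hand, the argument is essentially algebra plus the previously established strong consistency of the NPD estimator.
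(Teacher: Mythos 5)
Your proof is correct and follows essentially the same route as the paper: the paper's (very terse) proof also derives (\ref{InfinitePPM}) from Theorem \ref{theoInfiniteNPD} together with the sandwich bound (\ref{NPDBound}), and your write-up merely makes explicit the algebraic identity linking $C_n(Q_n)$ to the quotient density and the $o(n)$ estimate of the penalty term via $Q_n\le\max X_{1:n}$, $R_n\le n-1$, and $-\log w_k\le 2\log(k+2)$.
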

\begin{proof}
  Convergence (\ref{InfinitePPM}) follows from convergence
  (\ref{InfiniteNPD}) and the sandwich bound (\ref{NPDBound}).
\end{proof}

Term $C_n(Q_n)$ is a non-negative correction of the finite-alphabet
entropy estimator taken for the optimal quantization level and the
optimal Markov order.  In the following, we will show the this
correction vanishes ultimately almost surely if the process is upper
bounded.  To shed some light on this issue, let us introduce the
minimal sufficient quantization level of a probability measure
$P\in\mathbb{E}(\mu)$ relative to measure $\mu$, which is defined as
\begin{align}
  \label{QuantizationLevel}
  Q:=\inf\klam{q\ge 0:
  \frac{P(X_0^{q}|X_{-\infty:-1})}{\mu(X_0^{q})}
  =
  \frac{P(X_0|X_{-\infty:-1})}{\mu(X_0)}
  \text{ a.s.}
  }.
\end{align}
In particular, the minimal sufficient quantization level is $Q=0$ for
the memoryless source $P=\mu^{\mathbb{Z}}$, which seems somewhat
counterintuitive. What is more intuitive, we have $Q\le M$ if
$X_i\le M$ since $X_i^M=\klam{X_i}$ holds almost surely.

As an auxiliary result, we will show that for any stationary ergodic
measure $P$ with entropy $h_\mu<\infty$, statistic $Q_n$ does not
underestimate parameter $Q$.
\begin{proposition}
  \label{theoNoUnderestimation}
  Consider the setting of Example \ref{exNaturalNumbers}. We have
  \begin{align}
    \liminf_{n\to\infty} Q_n\ge Q \text{ a.s.}
  \end{align}
\end{proposition}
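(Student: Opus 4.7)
I argue by contradiction. Suppose that $P(\liminf_n Q_n < Q) > 0$. Since $Q_n$ takes values in $\{0, 1, 2, \ldots\}$, by decomposing the event $\{\liminf_n Q_n < Q\} \subseteq \bigcup_{q < Q} \{Q_n = q \text{ i.o.}\}$ and using countable additivity, there exist a finite level $q < Q$ and an event $E_q$ with $P(E_q) > 0$ on which $Q_n = q$ for infinitely many $n$, along some (random) subsequence $(n_k)$.

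The first technical step is a fixed-level analog of the convergence argument already used in Example \ref{exQuantiles}: for every $l \ge 0$,
\begin{align}
-\frac{1}{n}\log \max_{r \ge 0}\frac{\PPM_r^{\card\chi_l}(X_{1:n}^l)}{\mu^n(X_{1:n}^l)} \xrightarrow[n\to\infty]{\mathrm{a.s.}} h^l_\mu.
\end{align}
Indeed, the strong universality of $\PPM^{\card\chi_l}$ on the finite alphabet $\chi_l$ gives $-\frac{1}{n}\log \PPM^{\card\chi_l}(X_{1:n}^l) \to h^l$ a.s.; the Birkhoff ergodic theorem gives $-\frac{1}{n}\sum_i \log \mu(X_i^l) \to C_l$ a.s.; and the optimization over $r \le n-1$ costs only an $O(\log n)/n \to 0$ correction via the same sandwich bound that led to (\ref{NPDBound}).

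Since $(Q_n, R_n)$ is the argmax, on $E_q$ and for every level $l \ge 0$ and every $r'$,
\begin{align}
\max_{r}\frac{\PPM_r^{q+1}(X_{1:n_k}^q)}{\mu^{n_k}(X_{1:n_k}^q)} \ge \frac{\PPM_{r'}^{\card\chi_l}(X_{1:n_k}^l)}{\mu^{n_k}(X_{1:n_k}^l)}
\end{align}
along the subsequence $(n_k)$. Taking $-\frac{1}{n_k}\log$, maximizing the right-hand side over $r'$, and applying the display above at levels $q$ and $l$ yields $h^q_\mu \le h^l_\mu$. Sending $l \to \infty$ and invoking Lemma \ref{theoConvexMartingale} (which gives the monotone convergence $h^l_\mu \downarrow h_\mu$) produces $h^q_\mu \le h_\mu$. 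Combined with the reverse bound $h^q_\mu \ge h_\mu$ built into the same lemma, this forces $h^q_\mu = h_\mu$.

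The main obstacle is then to convert the equality $h^q_\mu = h_\mu$ into a contradiction with $q < Q$. Here the defining condition of $Q$ must be invoked: strict convexity of $-\log$ in the Jensen steps underlying the monotone convergence of Lemma \ref{theoConvexMartingale} shows that $h^q_\mu = h_\mu$ forces the Radon-Nikodym derivatives $dP_n|_{\mathcal{X}_q^n}/d\mu^n|_{\mathcal{X}_q^n}$ and $dP_n/d\mu^n$ to coincide $\mu^n$-a.s. for every $n$. Via the chain-rule/martingale passage of Barron \cite{Barron85} from finite to infinite past, this pointwise equality of joint densities propagates to the conditional densities: $\frac{P(X_0^q|X_{-\infty:-1})}{\mu(X_0^q)} = \frac{P(X_0|X_{-\infty:-1})}{\mu(X_0)}$ almost surely. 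By minimality of $Q$ in (\ref{QuantizationLevel}), this forces $q \ge Q$---contradicting $q < Q$. The delicate and technical part of the proof is precisely this translation from equality of integrated entropy rates to pointwise a.s.\ equality of the conditional densities that defines $Q$.
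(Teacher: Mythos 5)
Your first half is sound and is essentially a variant of the paper's argument: you extract a level $q<Q$ with $Q_n=q$ infinitely often on a positive-probability event, use the argmax property to compare level $q$ against every level $l$, and conclude $h^q_\mu\le\inf_{l}h^l_\mu=h_\mu$, hence $h^q_\mu=h_\mu$ (the paper reaches the analogous point slightly differently, via universality of $\NPD$, the sandwich bound (\ref{NPDBound}), and the Barron inequality applied against the true quantized process, rather than via a fixed-level universality limit for $\max_r\PPM_r$; both are fine).

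The genuine gap is the last step, which you yourself flag as the delicate part: the claim that $h^q_\mu=h_\mu$ forces $dP_n|_{\mathcal{X}_q^n}/d\mu^n|_{\mathcal{X}_q^n}=dP_n/d\mu^n$ for every $n$ ``by strict convexity in the Jensen steps of Lemma \ref{theoConvexMartingale}'' does not follow. Lemma \ref{theoConvexMartingale} and its Jensen step operate at a fixed $n$; to invoke strict convexity you would need the per-$n$ equality $h^q_\mu(n)=h_\mu(n)$ for some (or every) $n$. But what you have is only equality of the rates, i.e.\ of $\inf_n h^q_\mu(n)/n$ and $\inf_n h_\mu(n)/n$, and the non-negative per-$n$ gaps $h^q_\mu(n)-h_\mu(n)$ (which are Kullback--Leibler divergences between $P_n$ and its coarsened version) are only forced to be $o(n)$, not zero. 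So the strict-convexity argument has nothing to bite on, and the translation from rate equality to the pointwise condition defining $Q$ is missing. The way to close it --- and the route the paper takes --- is through the conditional representation given the infinite past: by the Breiman ergodic theorem (Barron's argument), $h^q_\mu=\mean\kwad{-\log P(X_0^{q}|X_{-\infty:-1}^{q})/\mu(X_0^{q})}$, which is $\ge\mean\kwad{-\log P(X_0^{q}|X_{-\infty:-1})/\mu(X_0^{q})}$ because conditioning on the finer past decreases entropy, and this in turn is strictly greater than $\mean\kwad{-\log P(X_0|X_{-\infty:-1})/\mu(X_0)}=h_\mu$ whenever $q<Q$, since the difference is an expected conditional divergence that vanishes exactly when the a.s.\ identity in (\ref{QuantizationLevel}) holds at level $q$. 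This gives $q<Q\implies h^q_\mu>h_\mu$ directly, which contradicts your $h^q_\mu=h_\mu$ and completes the proof without any per-$n$ density identification.
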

\begin{proof}
  We apply the proof idea of \cite[Theorem 6.12]{Debowski21} for the
  inconsistent estimator of the Markov order given by the PPM measure,
  the inconsistency result due to Csiszar and Shields
  \cite{CsiszarShields00}. By contradiction, let us assume that
  $\liminf_{n\to\infty} Q_n=q$ holds with a positive probability for
  some $q<Q$. Then on the respective random points, we obtain
\begin{align}
  h_\mu
  &=
    \lim_{n\to\infty} \frac{1}{n}\kwad{-\log\NPD_\mu(X_{1:n})}
    \text{ (by universality of NPD)}
    \nonumber\\
  &=
    \lim_{n\to\infty}
    \frac{1}{n}
    \kwad{-\log\frac{\PPM^{Q_n+1}_{R_n}(X_{1:n}^{Q_n})}{\mu^n(X_{1:n}^{Q_n})}}
    \text{ (by (\ref{NPDBound}))}
    \nonumber\\
  &\ge
    \liminf_{n\to\infty}
    \frac{1}{n}
    \kwad{-\log\frac{\PPM^{q+1}_{R_n}(X_{1:n}^{q})}{\mu^n(X_{1:n}^{q})}}
    \text{ (since $Q_n=q$ infinitely often)}
    \nonumber\\
  &\ge
    \lim_{n\to\infty}
    \frac{1}{n}\kwad{-\log\frac{P(X_{1:n}^{q})}{\mu^n(X_{1:n}^{q})}}
    \text{ (by the Barron inequality)}
    \nonumber\\
  &=
    \mean
    \kwad{-\log\frac{P(X_0^{q}|X_{-\infty:-1}^{q})}{\mu(X_0^{q})}}
    \text{ (by the Breiman ergodic theorem)}
    \nonumber\\
  &\ge
    \mean
    \kwad{-\log\frac{P(X_0^{q}|X_{-\infty:-1})}{\mu(X_0^{q})}}
    \text{ (conditioning decreases entropy)}
    \nonumber\\
  &>
    \mean
    \kwad{-\log\frac{P(X_0|X_{-\infty:-1})}{\mu(X_0)}}
    \text{ (by $q<Q$ and the definition of $Q$)}
    \nonumber\\
  &=
    h-H(P_1||\mu)
    .
\end{align}
Inequality $h_\mu>h-H(P_1||\mu)$ cannot be true so our assumption is
false. Thus $\liminf_{n\to\infty} Q_n\ge Q$ almost surely.
\end{proof}

Now we prove the desired claim about vanishing of $C_n(Q_n)$.
\begin{proposition}
  Consider the setting of Example \ref{exNaturalNumbers} with
  $X_i\le M<\infty$. Then
  \begin{align}
    \lim_{n\to\infty} C_n(Q_n)=0 \text{ a.s.}
  \end{align}
\end{proposition}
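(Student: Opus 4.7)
My plan is to reduce the claim to the statement that $Q_n$ eventually exceeds the essential supremum $M^*:=\esssup X_0$, because once $Q_n\ge M^*$ holds, every indicator $\boole{X_i>Q_n}$ appearing in $C_n(Q_n)$ vanishes a.s.\ and so does $C_n(Q_n)$ itself. Under the hypothesis $X_i\le M<\infty$, $M^*$ is a positive integer with $P(X_0=M^*)>0$ and $P(X_0>M^*)=0$, so in particular $\boole{X_i>q}=0$ a.s.\ for every $q\ge M^*$.

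The main step is to show $Q=M^*$, where $Q$ is the minimal sufficient quantization level defined in (\ref{QuantizationLevel}). The upper bound $Q\le M^*$ is immediate: for $q\ge M^*$ the bin $X_0^q$ coincides with $\{X_0\}$ a.s., so the defining equality is trivial. For the matching lower bound $Q\ge M^*$, I would take $q=M^*-1$ and examine the positive-probability event $\{X_0=M^*\}$. There $X_0^q=\{M^*,M^*+1,\ldots\}$, hence $P(X_0^q|X_{-\infty:-1})=P(X_0=M^*|X_{-\infty:-1})$, whereas $\mu(X_0^q)>\mu(\{M^*\})$ strictly, because $\mu$ has full support on $\mathbb{N}$ (so $\mu(M^*+1)>0$). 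Moreover $P(X_0=M^*|X_{-\infty:-1})>0$ a.s.\ on $\{X_0=M^*\}$, for otherwise the unconditional probability would vanish. Thus the two ratios in (\ref{QuantizationLevel}) disagree on a positive-probability set, so $q=M^*-1$ is not sufficient, giving $Q\ge M^*$.

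Once $Q=M^*$ is in hand, Proposition~\ref{theoNoUnderestimation} yields $\liminf_{n\to\infty} Q_n\ge Q=M^*$ a.s.; the integer-valuedness of $Q_n$ upgrades this $\liminf$ inequality to $Q_n\ge M^*$ for every sufficiently large $n$ a.s., and the reduction of the first paragraph then forces $C_n(Q_n)=0$ eventually a.s. The only delicate point is the identification $Q=M^*$, which uses essentially both the full support of $\mu$ on $\mathbb{N}$ (otherwise the bin masses in the lower bound argument could collapse) and the attainment $P(X_0=M^*)>0$; with that in hand, the integer-valuedness of $Q_n$ finishes the argument.
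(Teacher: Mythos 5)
Your overall strategy is the paper's: identify the minimal sufficient quantization level $Q$ of (\ref{QuantizationLevel}) with $M^*=\esssup X_0$, invoke Proposition \ref{theoNoUnderestimation} to get $\liminf_{n\to\infty}Q_n\ge M^*$ a.s., use integer-valuedness to conclude $Q_n\ge M^*$ for all large $n$, and observe that then every indicator $\boole{X_i>Q_n}$ vanishes a.s., so $C_n(Q_n)=0$ eventually. The upper bound $Q\le M^*$ and the remark that $P(X_0=M^*|X_{-\infty:-1})>0$ a.s.\ on $\klam{X_0=M^*}$ are correct.

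The gap is in the lower bound $Q\ge M^*$. You only show that the single level $q=M^*-1$ violates the condition in (\ref{QuantizationLevel}) and then conclude $Q\ge M^*$. But $Q$ is an infimum over all sufficient levels: failure at $q=M^*-1$ does not exclude that some smaller $q$ is sufficient, unless you also know that sufficiency is monotone in $q$ (upward closed along the nested partitions of Example \ref{exNaturalNumbers}), which you neither state nor prove. Moreover, your pointwise ratio comparison works only at $q=M^*-1$, because only there the tail cell intersected with $\klam{X_0\le M^*}$ is the singleton $\klam{M^*}$; for $q<M^*-1$ the two ratios need not visibly disagree. The missing ingredient (which also yields the monotonicity) is the summation/support step that underlies the paper's argument: if level $q$ is sufficient, then for almost every past the equalities $P(X_0=m|X_{-\infty:-1})=K\mu(m)$ hold for all $m>q$ in the conditional support with $K=P(X_0>q|X_{-\infty:-1})/\mu(\klam{m:m>q})$; summing them and using $\mu(m)>0$ for every $m$ shows that either $K=0$ or the conditional support contains every $m>q$, in particular $m=M^*+1$, which is impossible on the positive-probability set of pasts where $P(X_0=M^*|X_{-\infty:-1})>0$. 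This is exactly how the paper concludes that $P(X_0=m|X_{-\infty:-1})=0$ for all $m>Q$, hence $Q\ge\esssup X_i$. With this step added, your proof closes and essentially reproduces the paper's.
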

\begin{proof}
  Without loss of generality, suppose that $M:=\esssup X_i<\infty$
  almost surely. We have $Q\le M$.  Observe that
  \begin{align}
    Q=\inf\klam{q\ge 0:
    \forall_{m>q}
    P(X_0=m|X_{-\infty:-1})=K\mu(m)
    \text{ a.s.}
    },
  \end{align}
  where $K\ge 0$ is a certain random variable.  Since $\mu(m)>0$ and
  $P(X_0=m|X_{-\infty:-1})=0$ for $m>M$, hence $K=0$ almost surely and
  $P(X_0=m|X_{-\infty:-1})=0$ for all $m>Q$. In other words,
  $Q\ge\esssup X_i=M\ge \max X_{1:n}\ge Q_n$. Consequently,
  $\lim_{n\to\infty} Q_n=M$ holds almost surely in view of Proposition
  \ref{theoNoUnderestimation}. Since also $X_i^M=\klam{X_i}$ almost
  surely, we have $C_n(M)=0$ and $\lim_{n\to\infty} C_n(Q_n)=0$ almost
  surely in this case.
\end{proof}

\subsection{Real line and Gaussian processes}
\label{secReal}

Consider alphabet $\mathbb{X}=\mathbb{R}$ and the reference measure
$\mu$ being the normal distribution $N(m,\sigma^2)$ to be contrasted
with the Lebesgue measure $\lambda$. Entropies of $P$ with respect to
$\mu$ are written as $h_\mu(n)$ or $h_\mu$, whereas entropies of $P$
with respect to $\lambda$ are written as $h_\lambda(n)$ or
$h_\lambda$.  We have
\begin{align}
  \frac{d\mu}{d\lambda}(x)=\frac{1}{\sigma\sqrt{2\pi}}
  \exp\okra{-\frac{(x-m)^2}{2\sigma^2}}.
\end{align}
Let us write the rescaled second moment of the marginal distribution
\begin{align}
  M_2:=\int_{-\infty}^\infty \frac{(x-m)^2}{2\sigma^2}
  P_\lambda(x)dx
  = \frac{\var X_i+(\mean X_i-m)^2}{2\sigma^2},
\end{align}
denoting variance $\var X_i:=\mean(X_i-\mean X_i)^2$.

If $M_2<\infty$, we obtain like in Section \ref{secInfinite} that
\begin{align}
  h_\lambda(n)
  &=h_\mu(n)
    -n\int_{-\infty}^\infty P_\lambda(x)\log \frac{d\mu}{d\lambda}(x) dx.
    \nonumber\\
  &=h_\mu(n)+n\kwad{M_2\log e+\log\sigma\sqrt{2\pi}}.
\end{align}
Hence $h_\lambda=h_\mu+M_2\log e+\log\sigma\sqrt{2\pi}$.  Since $\mu$
is a probability measure then $h_\mu\le 0$. Consequently, we have a
sufficient condition
\begin{align}
  P\in\mathbb{E},
  \;
  P_n\ll\lambda^n,
  \;
  \abs{\mean X_i}<\infty,
  \;
  \var X_i<\infty,
  \;
  \abs{h_\lambda}<\infty
  \implies
  P\in\mathbb{E}(\mu).
\end{align}

In this case, we may estimate the entropy rate in the following way.
\begin{theorem}
  \label{theoGaussNPD}
  Consider $\mathbb{X}=\mathbb{R}$. Let $\mu$ be the normal
  distribution $N(m,\sigma^2)$, whereas $\lambda$ be the Lebesgue
  measure.  Suppose that $P\in\mathbb{E}$ with $P_n\ll\lambda^n$,
  $\abs{\mean X_i}<\infty$, $\var X_i<\infty$, and
  $\abs{h_\lambda}<\infty$. Then
  \begin{align}
    \label{GaussNPD}
    \lim_{n\to\infty}\frac{1}{n}\kwad{-\log\NPD_\mu(X_{1:n})+
    \kwad{\sum_{i=1}^n
    \frac{(X_i-m)^2}{2\sigma^2}}\log e}+
    \log\sigma\sqrt{2\pi}=h_\lambda
    \text{ a.s.}
  \end{align}
\end{theorem}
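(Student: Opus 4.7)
The proof assembles three pieces that have already been put in place by the discussion preceding the statement. The first piece is the verification that $P$ lies in the class $\mathbb{E}(\mu)$ to which Theorem~\ref{theoFiniteAS} applies. Since $\mu$ is a probability measure, $\mu \ll \lambda$ with a strictly positive density, so $P_n \ll \lambda^n$ implies $P_n \ll \mu^n$. The hypotheses $\abs{\mean X_i}<\infty$ and $\var X_i<\infty$ imply $M_2<\infty$, and then the computation carried out just before the theorem gives $h_\mu = h_\lambda - M_2\log e - \log\sigma\sqrt{2\pi}$, so $\abs{h_\mu}<\infty$ follows from $\abs{h_\lambda}<\infty$. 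Thus $P\in\mathbb{E}(\mu)$.

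The second piece is the universality of $\NPD$ with respect to $\mu$. Since the normal distribution $\mu = N(m,\sigma^2)$ is a finite measure on the countably generated space $(\mathbb{R},\mathcal{B}(\mathbb{R}))$, Theorem~\ref{theoFiniteAS} applies directly and yields
\begin{align}
  \lim_{n\to\infty}\frac{-\log \NPD_\mu(X_{1:n})}{n}=h_\mu \text{ a.s.}
\end{align}

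The third piece is an application of the Birkhoff ergodic theorem to the stationary ergodic real-valued process $(X_i)$. Because $\mean X_i^2<\infty$ by the variance hypothesis, the function $x\mapsto (x-m)^2/(2\sigma^2)$ is in $L^1(P)$, and therefore
\begin{align}
  \lim_{n\to\infty}\frac{1}{n}\sum_{i=1}^n \frac{(X_i-m)^2}{2\sigma^2}= M_2 \text{ a.s.}
\end{align}

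It only remains to add the two almost sure limits and the constant $\log\sigma\sqrt{2\pi}$, and invoke the identity $h_\lambda = h_\mu + M_2\log e + \log\sigma\sqrt{2\pi}$ established in the paragraph before the theorem. The sum of the left hand sides is exactly the expression on the left of \eqref{GaussNPD}, while the sum of the right hand sides is $h_\lambda$, which completes the proof. There is no real obstacle in the argument beyond checking that the integrability hypotheses are strong enough both to place $P$ into $\mathbb{E}(\mu)$ and to activate Birkhoff on the squared summand; both follow immediately from the assumed finiteness of $\var X_i$ and $\abs{h_\lambda}$.
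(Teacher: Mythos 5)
Your proposal is correct and follows essentially the same route as the paper's own proof: strong universality of $\NPD$ with respect to the finite reference measure $\mu$ gives the a.s.\ limit $h_\mu$, the Birkhoff ergodic theorem gives the a.s.\ limit $M_2$ for the quadratic averages, and the identity $h_\lambda = h_\mu + M_2\log e + \log\sigma\sqrt{2\pi}$ combines them. Your explicit verification that $P\in\mathbb{E}(\mu)$ just spells out the sufficient condition the paper establishes in the paragraph preceding the theorem, so nothing is added or missing.
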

\begin{proof}
  Quantity $-n^{-1}\log\NPD_\mu(X_{1:n})$ is a strongly consistent
  estimator of the entropy rate $h_\mu$, whereas
  $-n^{-1}\sum_{i=1}^n (X_i-m)^2/2\sigma^2$ is a strongly consistent
  estimator of the rescaled second moment $M_2$. Since
  $M_2<\infty$ then the linear combination of these
  estimators tends to $h_\lambda$.
\end{proof}
\noindent
Thus, the corrected NPD estimator (\ref{GaussNPD}) is a strongly
consistent estimator of the entropy rate for all non-deterministic
stationary ergodic Gaussian processes since $\abs{\mean X_i}<\infty$,
$\var X_i<\infty$, and $\abs{h_\lambda}<\infty$ holds in this case.
As we have mentioned in Section \ref{secIntroduction}, Feutrill and
Roughan \cite{FeutrillRoughan21} supposed that the NPD estimator can
estimate the entropy rate of a Gaussian process but they did not carry
out this idea rigorously enough.

Consider the dyadic filtration (\ref{Dyadic}) from Example
\ref{exQuantiles}.  In this case, computing quantizations $X_i^l$
requires computing quantiles of the normal distribution. An
interesting question is whether such a filtration is
optimal. Moreover, we may suppose that the corrected NPD estimator
(\ref{GaussNPD}) improves if we take parameters $m$ and $\sigma$ close
to the expectation and variance of $X_i$ with respect to
$P$. Obviously, we can specialize Theorem \ref{theoGaussNPD} as the
following proposition.
\begin{proposition}
  Consider $\mathbb{X}=\mathbb{R}$.  Let $\mu$ be the normal
  distribution $N(m,\sigma^2)$, whereas $\lambda$ be the Lebesgue
  measure. Suppose that $P\in\mathbb{E}$ with $P_n\ll\lambda^n$,
  $\mean X_i=m$, $\var X_i=\sigma^2$, and
  $\abs{h_\lambda}<\infty$. Then
  \begin{align}
    \lim_{n\to\infty}\frac{1}{n}\kwad{-\log\NPD_\mu(X_{1:n})}+
    \log\sigma\sqrt{2\pi e}=h_\lambda
    \text{ a.s.}
  \end{align}
\end{proposition}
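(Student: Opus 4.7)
The plan is to invoke Theorem \ref{theoGaussNPD} directly and then simplify the correction term using the specific assumptions $\mean X_i = m$ and $\var X_i = \sigma^2$ together with the Birkhoff ergodic theorem. Since all hypotheses of Theorem \ref{theoGaussNPD} are satisfied ($P \in \mathbb{E}$, $P_n \ll \lambda^n$, $\abs{\mean X_i} < \infty$, $\var X_i < \infty$, $\abs{h_\lambda} < \infty$), that theorem immediately yields
\begin{align}
    \lim_{n\to\infty}\frac{1}{n}\kwad{-\log\NPD_\mu(X_{1:n}) + \kwad{\sum_{i=1}^n \frac{(X_i-m)^2}{2\sigma^2}}\log e} + \log\sigma\sqrt{2\pi} = h_\lambda \text{ a.s.}
\end{align}
So the work reduces to evaluating the almost sure limit of the rescaled second moment under the matched-parameter assumption.

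Next, I would compute $M_2$ explicitly. Under $\mean X_i = m$ and $\var X_i = \sigma^2$, the definition gives
\begin{align}
    M_2 = \frac{\var X_i + (\mean X_i - m)^2}{2\sigma^2} = \frac{\sigma^2 + 0}{2\sigma^2} = \frac{1}{2}.
\end{align}
Since $P$ is stationary ergodic and $(X_i - m)^2/(2\sigma^2)$ is an integrable function of $X_i$ (by $\var X_i < \infty$), the Birkhoff ergodic theorem yields
\begin{align}
    \lim_{n\to\infty} \frac{1}{n}\sum_{i=1}^n \frac{(X_i-m)^2}{2\sigma^2} = M_2 = \frac{1}{2} \text{ a.s.}
\end{align}

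Finally, I would substitute this limit back into the conclusion of Theorem \ref{theoGaussNPD}. The correction term contributes $\frac{1}{2}\log e$ almost surely, and a short arithmetic check gives $\frac{1}{2}\log e + \log\sigma\sqrt{2\pi} = \log\sigma + \frac{1}{2}\log(2\pi e) = \log\sigma\sqrt{2\pi e}$, which is exactly the constant appearing in the proposition. There is no real obstacle here: the result is a genuine corollary whose only content beyond Theorem \ref{theoGaussNPD} is the ergodic-theorem evaluation of the empirical second moment and the recognition that $\log\sigma\sqrt{2\pi e}$ is the familiar differential entropy of $N(m,\sigma^2)$, in line with the intuition that matching the reference measure to the true marginal absorbs the cross-entropy correction into a single constant.
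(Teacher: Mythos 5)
Your proof is correct and matches the paper's route: the paper states this proposition as a direct specialization of Theorem \ref{theoGaussNPD}, with the matched parameters forcing $M_2=\tfrac{1}{2}$ and the empirical second moment handled by the Birkhoff ergodic theorem exactly as you do, followed by the same arithmetic $\tfrac{1}{2}\log e+\log\sigma\sqrt{2\pi}=\log\sigma\sqrt{2\pi e}$. Nothing further is needed.
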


The problem with the above estimator is that we
need to know the exact expectation and the variance of $X_i$. Can we
estimate them from sample $X_{1:n}$ and plug the result into the NPD
estimator? Consider random measures $\mu_n\sim N(m_n,\sigma^2_n)$,
where
\begin{align}
  m_n&:=\frac{1}{n} \sum_{i=1}^n X_i,
  &
  \sigma_n^2&:=\frac{1}{n} \sum_{i=1}^n (X_i-m_n)^2.
\end{align}
We may ask whether there holds still convergence
\begin{align}
  \lim_{n\to\infty}\frac{1}{n}\kwad{-\log\NPD_{\mu_n}(X_{1:n})}+
    \log\sigma_n\sqrt{2\pi e}=h_\lambda
    \text{ a.s.}
\end{align}
However, this question is not stated precisely enough. Let us note
that measure $\NPD$ depends implicitly on the quantization path
$\mathcal{X}_l\uparrow \mathcal{X}$ and measure $\mu$. When we variate
the reference measure $\mu\to\mu_n$, it is not clear whether we have
also to variate $\sigma$-fields $\mathcal{X}_l\to
\mathcal{X}_{l,n}$. In any case, the proof technique of Theorem
\ref{theoFiniteAS} works no longer and we have no simple guarantee of
consistency.

\section{Conclusion}
\label{secConclusion}

Drawing an inspiration from the non-parametric differential (NPD)
entropy rate estimator by Feutrill and Roughan
\cite{FeutrillRoughan21}, we have constructed a universal NPD density
that works for stationary ergodic sources on any countably generated
measurable space---as long as we agree to define the entropy relative
to a finite reference measure. Our theoretical development of the NPD
estimator is analogical to Ryabko's \cite{Ryabko88en2,Ryabko08}
development of the PPM measure by Cleary and Witten
\cite{ClearyWitten84}. Whereas Ryabko considered a countably infinite
mixture of source estimates of distinct Markov orders, we have
considered an analogous mixture of source estimates of distinct
quantization levels.

As we have shown, the NPD density solves the problem of consistent
entropy rate estimation. Moreover, it can be used to obtain a strongly
consistent Ces\`aro mean estimator of the conditional density given an
infinite past, in the total variation, using the idea of
\cite{GyorfiPaliMeulen94}. This in turn solves the problem of
universal prediction with the $0-1$ loss for a countable alphabet,
cf.\ \cite{DebowskiSteifer22}.  The NPD density can also shed light on
sufficient conditions for consistent estimation of the entropy rate
with respect to infinite reference measures, cf.\
\cite{SilvaPiantanida20}.

There is one hanging gun that has not shot in this play,
however. Among the applications of universal coding, in Section
\ref{secIntroduction}, we have mentioned estimation of the (hidden)
Markov order, see \cite{MerhavGutmanZiv89,ZivMerhav92} for classical
references. Analogously, we may ask whether a similar approach can be
developed for estimation of the minimal sufficient quantization level
for real-valued stochastic processes. The general method of Markov
order estimation requires comparing universal measures or codes with
the maximum likelihood. However, we are not sure whether this method
can be translated to quantization level estimation and whether the
concept of the minimal sufficient quantization level of an arbitrary
process can be reasonably defined, see definition
(\ref{QuantizationLevel}) which is somewhat counterintuitive. Solving
this issue is deferred to another work. We recall that we have also
stated some related open questions in Section \ref{secReal} that
concern the optimal quantization of Gaussian processes.

Another important open topic is the computational complexity of
universal densities.  The infinite series (\ref{NPDm}) can be
truncated or approximated in some cases, as we have shown in Examples
\ref{exQuantiles} and \ref{exNaturalNumbers}. However, we have not
cared about the time complexity or the speed of convergence of
universal densities, being satisfied by their general computability or
consistency. In the future research, one should develop explicit
bounds for time and memory complexity of the total NPD density and to
find other universal densities that can be computed faster. Yet
another idea for future research is to investigate the rate of
convergence of the NPD entropy rate estimates and of the Ces\`aro mean
estimator of conditional density given an infinite past. We think that
both mathematical statistics and information theory can benefit from
such analyses.

\section*{Funding}

This work was supported by the National Science Centre Poland
grant 2018/31/B/HS1/04018.

\section*{Acknowledgments}

I thank two anonymous referees who encouraged me to work on improving
the structure of this paper.

\bibliographystyle{IEEEtran}

\bibliography{0-journals-abbrv,0-publishers-abbrv,ai,ql,math,mine,tcs,books}

\begin{thebibliography}{10}
\providecommand{\url}[1]{#1}
\csname url@rmstyle\endcsname
\providecommand{\newblock}{\relax}
\providecommand{\bibinfo}[2]{#2}
\providecommand\BIBentrySTDinterwordspacing{\spaceskip=0pt\relax}
\providecommand\BIBentryALTinterwordstretchfactor{4}
\providecommand\BIBentryALTinterwordspacing{\spaceskip=\fontdimen2\font plus
\BIBentryALTinterwordstretchfactor\fontdimen3\font minus
  \fontdimen4\font\relax}
\providecommand\BIBforeignlanguage[2]{{%
\expandafter\ifx\csname l@#1\endcsname\relax
\typeout{** WARNING: IEEEtran.bst: No hyphenation pattern has been}%
\typeout{** loaded for the language `#1'. Using the pattern for}%
\typeout{** the default language instead.}%
\else
\language=\csname l@#1\endcsname
\fi
#2}}

\bibitem{ClearyWitten84}
J.~G. Cleary and I.~H. Witten, ``Data compression using adaptive coding and
  partial string matching,'' \emph{IEEE Trans.\ Comm.}, vol.~32, pp. 396--402,
  1984.

\bibitem{Ryabko88en2}
B.~Y. Ryabko, ``Prediction of random sequences and universal coding,''
  \emph{Probl.\ Inform.\ Transm.}, vol.~24, no.~2, pp. 87--96, 1988.

\bibitem{Ryabko08}
B.~Ryabko, ``Compression-based methods for nonparametric density estimation,
  on-line prediction, regression and classification for time series,'' in
  \emph{2008 IEEE Information Theory Workshop, Porto}.\hskip 1em plus 0.5em
  minus 0.4em\relax Institute of Electrical and Electronics Engineers, 2008,
  pp. 271--275.

\bibitem{RyabkoAstolaMalyutov16}
B.~Ryabko, J.~Astola, and M.~Malyutov, \emph{Compression-Based Methods of
  Statistical Analysis and Prediction of Time Series}.\hskip 1em plus 0.5em
  minus 0.4em\relax Springer, 2016.

\bibitem{ZivLempel77}
J.~Ziv and A.~Lempel, ``A universal algorithm for sequential data
  compression,'' \emph{IEEE Trans.\ Inform.\ Theory}, vol.~23, pp. 337--343,
  1977.

\bibitem{NeuhoffShields98}
D.~Neuhoff and P.~C. Shields, ``Simplistic universal coding,'' \emph{IEEE
  Trans.\ Inform.\ Theory}, vol. IT-44, pp. 778--781, 1998.

\bibitem{KiefferYang00}
J.~C. Kieffer and E.~Yang, ``Grammar-based codes: {A} new class of universal
  lossless source codes,'' \emph{IEEE Trans.\ Inform.\ Theory}, vol.~46, pp.
  737--754, 2000.

\bibitem{MerhavGutmanZiv89}
N.~Merhav, M.~Gutman, and J.~Ziv, ``On the estimation of the order of a
  {Markov} chain and universal data compression,'' \emph{IEEE Trans.\ Inform.\
  Theory}, vol.~35, no.~5, pp. 1014--1019, 1989.

\bibitem{ZivMerhav92}
J.~Ziv and N.~Merhav, ``Estimating the number of states of a finite-state
  source,'' \emph{IEEE Trans.\ Inform.\ Theory}, vol.~38, no.~1, pp. 61--65,
  1992.

\bibitem{Debowski21b}
{\L}.~D\k{e}bowski, ``A refutation of finite-state language models through
  {Zipf's} law for factual knowledge,'' \emph{Entropy}, vol.~23, p. 1148, 2021.

\bibitem{Debowski21}
------, \emph{Information Theory Meets Power Laws: Stochastic Processes and
  Language Models}.\hskip 1em plus 0.5em minus 0.4em\relax Wiley \& Sons, 2021.

\bibitem{GyorfiPaliMeulen94}
L.~Gy\"orfi, I.~P\'ali, and E.~C. van~der Meulen, ``There is no universal
  source code for infinite alphabet,'' \emph{IEEE Trans.\ Inform.\ Theory},
  vol.~40, pp. 267--271, 1994.

\bibitem{Pinsker60en}
M.~S. Pinsker, \emph{Information and Information Stability of Random Variables
  and Processes}.\hskip 1em plus 0.5em minus 0.4em\relax Holden-Day, 1964.

\bibitem{CsiszarKorner11}
I.~Csisz\'ar and J.~K\"orner, \emph{Information Theory: Coding Theorems for
  Discrete Memoryless Systems}.\hskip 1em plus 0.5em minus 0.4em\relax
  Cambridge University Press, 2011.

\bibitem{DebowskiSteifer22}
{\L}.~D\k{e}bowski and T.~Steifer, ``Universal coding and prediction on ergodic
  random points,'' \emph{Bull.\ Symb.\ Logic}, vol.~28, no.~2, pp. 387--412,
  2022.

\bibitem{MorvaiWeiss21}
G.~Morvai and B.~Weiss, ``On universal algorithms for classifying and
  predicting stationary processes,'' \emph{Probab.\ Surveys}, vol.~18, pp.
  77--131, 2021.

\bibitem{Bailey76}
D.~H. Bailey, ``Sequential schemes for classifying and predicting ergodic
  processes,'' Ph.D. dissertation, Stanford University, 1976.

\bibitem{Ornstein78}
D.~S. Ornstein, ``Guessing the next output of a stationary process,''
  \emph{Israel J.\ Math.}, vol.~30, no.~3, pp. 292--296, 1978.

\bibitem{Kieffer78}
J.~Kieffer, ``A unified approach to weak universal source coding,'' \emph{IEEE
  Trans.\ Inform.\ Theory}, vol.~24, pp. 674--682, 1978.

\bibitem{SilvaPiantanida20}
J.~F. Silva and P.~Piantanida, ``Universal weak variable-length source coding
  on countably infinite alphabets,'' \emph{IEEE Trans.\ Inform.\ Theory},
  vol.~66, no.~1, pp. 649--668, 2020.

\bibitem{FeutrillRoughan21}
A.~Feutrill and M.~Roughan, ``{NPD} entropy: A non-parametric differential
  entropy rate estimator,'' 2021, \url{https://arxiv.org/abs/2105.11580}.

\bibitem{KontoyiannisOthers98}
I.~Kontoyiannis, P.~H. Algoet, Y.~M. Suhov, and A.~J. Wyner, ``Nonparametric
  entropy estimation for stationary processes and random fields, with
  applications to {English} text,'' \emph{IEEE Trans.\ Inform.\ Theory},
  vol.~44, pp. 1319--1327, 1998.

\bibitem{GaoKontoyiannisBienenstock08}
Y.~Gao, I.~Kontoyiannis, and E.~Bienenstock, ``Estimating the entropy of binary
  time series: Methodology, some theory and a simulation study,''
  \emph{Entropy}, vol.~10, pp. 71--99, 2008.

\bibitem{Barron85}
A.~R. Barron, ``The strong ergodic theorem for densities: {Generalized}
  {Shannon-McMillan-Breiman} theorem,'' \emph{Ann.\ Probab.}, vol.~13, pp.
  1292--1303, 1985.

\bibitem{Barron85b}
------, ``Logically smooth density estimation,'' Ph.D. dissertation, Stanford
  University, 1985.

\bibitem{Debowski09}
{\L}.~D\k{e}bowski, ``A general definition of conditional information and its
  application to ergodic decomposition,'' \emph{Statist.\ Probab.\ Lett.},
  vol.~79, pp. 1260--1268, 2009.

\bibitem{YangBarron99}
Y.~Yang and A.~Barron, ``Information-theoretic dtermination of minimax rates of
  convergence,'' \emph{Ann.\ Statist.}, vol.~27, no.~5, pp. 1564--1599, 1999.

\bibitem{Fekete23}
M.~Fekete, ``{\"Uber die Verteilung der Wurzeln bei gewissen algebraischen
  Gleichungen mit ganzzahligen Koeffizienten},'' \emph{Math.\ Z.}, vol.~17, pp.
  228--249, 1923.

\bibitem{Shannon48}
C.~Shannon, ``A mathematical theory of communication,'' \emph{Bell Syst.\
  Tech.\ J.}, vol.~30, pp. 379--423,623--656, 1948.

\bibitem{Breiman57}
L.~Breiman, ``The individual ergodic theorem of information theory,''
  \emph{Ann.\ Math.\ Statist.}, vol.~28, pp. 809--811, 1957.

\bibitem{Chung61}
K.~L. Chung, ``A note on the ergodic theorem of information theory,''
  \emph{Ann.\ Math.\ Statist.}, vol.~32, pp. 612--614, 1961.

\bibitem{AlgoetCover88}
P.~H. Algoet and T.~M. Cover, ``A sandwich proof of the
  {Shannon-McMillan-Breiman} theorem,'' \emph{Ann.\ Probab.}, vol.~16, pp.
  899--909, 1988.

\bibitem{Algoet94}
P.~H. Algoet, ``The strong law of large numbers for sequential decisions under
  uncertainty,'' \emph{IEEE Trans.\ Inform.\ Theory}, vol.~40, no.~3, pp.
  609--633, 1994.

\bibitem{Chaitin75}
G.~J. Chaitin, ``A theory of program size formally identical to information
  theory,'' \emph{J.\ ACM}, vol.~22, pp. 329--340, 1975.

\bibitem{Debowski18}
{\L}.~D\k{e}bowski, ``Is natural language a perigraphic process? {The} theorem
  about facts and words revisited,'' \emph{Entropy}, vol.~20, no.~2, p.~85,
  2018.

\bibitem{Jelinek97}
F.~Jelinek, \emph{Statistical Methods for Speech Recognition}.\hskip 1em plus
  0.5em minus 0.4em\relax The MIT Press, 1997.

\bibitem{Azuma67}
K.~Azuma, ``Weighted sums of certain dependent random variables,'' \emph{Tohoku
  Math.\ J.\ Sec.\ Ser.}, vol.~19, no.~3, pp. 357--367, 1967.

\bibitem{CsiszarShields00}
I.~Csiszar and P.~C. Shields, ``The consistency of the {BIC Markov} order
  estimator,'' \emph{Ann.\ Statist.}, vol.~28, pp. 1601--1619, 2000.

\end{thebibliography}

\end{document}